\documentclass[11pt,reqno]{amsart}
\usepackage{amssymb}
\usepackage{amsfonts}
\usepackage{amsmath}
\usepackage{mathrsfs}
\usepackage{stmaryrd}
\usepackage{physics}
\usepackage{braket}
\usepackage{dsfont}
\usepackage{bbold}
\usepackage{graphicx}
\usepackage{relsize}
\usepackage{makecell}
\usepackage[table,xcdraw]{xcolor}
\usepackage{enumerate}
\usepackage[pagebackref, colorlinks = true, linkcolor = blue, urlcolor  = blue, citecolor = red]{hyperref}
\usepackage[margin=1in]{geometry}
\usepackage{enumitem}
\usepackage{mathtools}

\usepackage{graphbox}
\usepackage{comment}
\usepackage{float}
\usepackage[font=scriptsize]{caption}

\newcommand{\Hil}{\mathcal{H}}

\newcommand{\cH}{\mathcal{H}}

\newcommand{\cL}{\mathcal{L}}

\newcommand{\iden}{\mathbb{1}}
\newcommand{\eps}{\varepsilon}
\renewcommand{\epsilon}{\varepsilon}
\renewcommand{\phi}{\varphi}
\newcommand{\overbar}[1]{\mkern 1.5mu\overline{\mkern-1.5mu#1\mkern-1.5mu}\mkern 1.5mu}

\newcommand{\B}[1]{\mathcal{L}({#1})}
\newcommand{\State}[1]{\mathcal{D}({#1})}


\newcommand{\id}{{\rm{id}}}

\newcommand{\supp}{{\rm{supp }}}

\newtheorem{theorem}{Theorem}[section]

\newtheorem*{definition*}{Definition}

\newtheorem{corollary}[theorem]{Corollary}
\newtheorem{lemma}[theorem]{Lemma}
\newtheorem{remark}[theorem]{Remark}

\newtheorem*{conjecture*}{Conjecture}

\newcommand\vertarrowbox[3][6ex]{%
  \begin{array}[t]{@{}c@{}} #2 \\
  \left\uparrow\vcenter{\hrule height #1}\right.\kern-\nulldelimiterspace\\
  \makebox[0pt]{\scriptsize#3}
  \end{array}%
}

\theoremstyle{definition}

\definecolor{darkgreen}{rgb}{0,0.392,0}

\author{Satvik Singh}
\email{satviksingh2@gmail.com}
\address{\parbox{\linewidth}{ Department of Applied Mathematics and Theoretical Physics, \\ University of Cambridge, Cambridge, United Kingdom }}

\author{Nilanjana Datta}
\email{n.datta@damtp.cam.ac.uk}
\address{\parbox{\linewidth}{Department of Applied Mathematics and Theoretical Physics, \\ University of Cambridge, Cambridge, United Kingdom}}

\title{Information transmission under Markovian noise}

\begin{document}

\begin{abstract}
 We consider an open quantum system undergoing Markovian dynamics, the latter being modelled by a discrete-time quantum Markov semigroup $(\Phi^n)_{n \in {\mathbb{N}}}$, resulting from the action of sequential uses of a quantum channel $\Phi$, with $n \in {\mathbb{N}}$ being the discrete time parameter. We find upper and lower bounds on the one-shot $\eps$-error information transmission capacities of $\Phi^n$ for a finite time $n\in \mathbb{N}$ and $\eps \in [0,1)$ in terms of the structure of the peripheral space of the channel $\Phi$. We consider transmission of $(i)$ classical information (both in the unassisted and entanglement-assisted settings); $(ii)$ quantum information and $(iii)$ private classical information.
\end{abstract}

\maketitle

\section{Introduction}
The time evolution of an {\em{open quantum system,}} i.e.~one which interacts with its surroundings, can be considered to be Markovian when the interactions are assumed to be weak. In this scenario, the dynamics of the system can be modelled by a quantum Markov semigroup (QMS). More precisely, let $A$ denote an open quantum system with the associated finite-dimensional Hilbert space $\Hil_A$ and the space $\cL(\cH_A)$ of linear operators acting on $\cH_A$. Then, in the discrete-time case, the QMS is given by $(\Phi^n)_{n \in {\mathbb{N}}}$, where $\Phi :\cL(\cH_A) \to \cL(\cH_A)$ is a quantum channel (i.e.~a linear completely positive trace-preserving map), $\Phi^n = \Phi \circ \Phi \cdots \circ \Phi$ denotes the $n$-fold composition of $\Phi$ with itself, and $n \in {\mathbb{N}}$ plays the role of the discrete time parameter. The nomenclature is justified by the fact that $(\Phi^n )_{n\in \mathbb{N}}$ satisfies the semigroup property: $\forall n,m \in {\mathbb{N}}: \Phi^{n+m}= \Phi^n \circ \Phi^m$.

In this paper, we consider a quantum system/memory $A$ with a Markovian noise model $(\Phi^n )_{n\in \mathbb{N}}$ as defined above. Our task is to store as much information in the memory as possible, in such a way that it can be reliably recovered (with some error $\epsilon\in [0,1)$) after the memory is left to evolve for some time $n\in \mathbb{N}$. Building such a quantum memory that is able to store information for a long time is crucial in order to build a reliable quantum computer, and consequently, this task has been studied from different perspectives \cite{Tehral2015memory, Brown2016memory}. In this paper, we adopt a Shannon-theoretic viewpoint, where we want to analyze the maximum amount of information that can be stored in memory without placing any physical or computational restrictions on the encoding and decoding operations. Put differently, we are interested in characterizing the {\em{one-shot}} $\epsilon-$error information-transmission capacities of $\Phi^n$ for a given error $\epsilon\in [0,1)$ and time $n\in \mathbb{N}$. This problem was recently studied in \cite{singh2024zero} in the asymptotic time limit $n\to \infty$ for error $\epsilon=0$ (see also \cite[Section V]{guan2016zero}), and in \cite{fawzi2024error} in the asymptotic time limit $n\to \infty$ for error $\epsilon\in[0,1)$.

A quantum channel has different kinds of information-transmission capacities. These depend, for example, on the nature of information being transmitted (classical or quantum), whether there are any auxiliary resources that the sender and receiver might employ in the information-transmission task (e.g.~pre-shared entanglement), and whether the information to be transmitted is private, i.e.~required to be inaccessible to an eavesdropper. In view of these considerations, we study the classical, quantum, entanglement-assisted classical, and private classical capacities in this paper.

Note that in the familiar {\em{asymptotic, memoryless setting}} of quantum Shannon theory, one evaluates capacities of a quantum channel $\Phi$ in the {\em{parallel setting}}. This corresponds to evaluating the optimal rate of information transmission through $\Phi^{\otimes n}$ in the limit $n \to \infty$, under the requirement that the error incurred in the transmission vanishes in this limit. In contrast, here we consider information transmission through $\Phi$ in the {\em{sequential setting}}, that is, through $n$ sequential uses of $\Phi$. In addition, we focus on the more realistic scenario in which we consider using $\Phi$ a finite number of times in succession (i.e.~$n < \infty$) and allow a non-zero probability of error (say, $\eps \in [0,1)$) in transmitting the information through $\Phi^n$.

\medskip

\noindent
{\bf{Layout of the paper:}} In Section~\ref{sec:prelims}
we introduce the relevant mathematical notation and definitions, as well as the information-processing tasks (or protocols) considered in this paper. Our main result is stated in Theorem~\ref{theorem:main}. The theorem provides upper and lower bounds on the one-shot $\eps$-error capacities of $\Phi^n$ for finite $n \in {\mathbb{N}}$ in terms of the structure of the peripheral space of $\Phi$. Some technical lemmas which are used in the proof of Theorem~\ref{theorem:main} are stated and proved in Appendix~\ref{appen:tech}. The corresponding upper and lower bounds on the one-shot $\eps$-error capacities of $\Phi^n$ in the asymptotic time limit ($n\to \infty$) are stated in Corollary~\ref{corollary:main}.

\section{Preliminaries}
\label{sec:prelims}
We denote quantum systems by capital letters $A,B,C$ and the associated (finite-dimensional) Hilbert spaces by $\Hil_A, \Hil_B$ and $\Hil_C$ with dimensions $d_A, d_B$ and $d_C$, respectively. For a joint system $AB$, the associated Hilbert space is $\Hil_A\otimes \Hil_B$. The space of linear operators acting on $\Hil_A$ is denoted by $\B{\Hil_A}$ and the convex set of quantum states or density operators (these are positive semi-definite operators in $\B{\Hil_A}$ with unit trace) is denoted by $\State{\Hil_A}$. For a unit vector $\ket{\psi}\in \Hil_A$, the pure state $\ketbra{\psi}\in \State{\Hil_A}$ is denoted by $\psi$. A quantum channel $\Phi:\B{\Hil_A}\to \B{\Hil_B}$ is a linear, completely positive, and trace preserving map. By Stinespring's dilation theorem, for a quantum channel $\Phi : \B{\Hil_A}\to \B{\Hil_B}$, there exists an isometry $V: \Hil_A \to \Hil_B\otimes \Hil_E$ (called the Stinespring isometry) such that for all $X \in \B{\Hil_A}$,  $\Phi (X) = \Tr_E (VX V^{\dagger})$, where $\Tr_E$ denotes the partial trace operation over the $E$ subsystem. The corresponding complementary channel $\Phi^c : \B{\Hil_A}\to \B{\Hil_E}$ is then defined as $\Phi^c (X) = \Tr_B (VX V^{\dagger}).$ The adjoint $\Phi^*$ of a quantum channel $\Phi : \B{\Hil_A}\to \B{\Hil_B}$ is defined through the following relation: $\Tr(Y \Phi(X))= \Tr (\Phi^*(Y) X)$ for any $X \in \B{\Hil_A}$ and $Y \in \B{\Hil_B}$.
\begin{remark}
    To make the systems on which an operator or a channel acts more explicit, we sometimes denote operators $X\in \B{\Hil_A}$ and linear maps $\Phi:\B{\Hil_A}\to \B{\Hil_B}$ by $X_A$ and $\Phi_{A\to B}$, respectively.
\end{remark}
For a bipartite operator $X_{RA}$ and a linear map $\Phi_{A\to B}$, we use the shorthand $\Phi_{A\to B}(X_{RA})$ to denote $(\id_R \otimes \Phi_{A\to B}) (X_{RA})$, where $\id_R $ is the identity map on $\B{\Hil_R}$. Similarly, $X_R$ and $X_A$ denote the reduced operators on $R$ and $A$, respectively, i.e. $X_R := \Tr_A X_{RA}$ and $X_A := \Tr_R X_{RA}$.

The trace norm of a linear operator $X\in \B{\Hil_A}$ is defined as $\norm{X}_1 := \Tr \sqrt{X^{\dagger}X}$. The diamond norm of a linear map $\Phi: \B{\Hil_A}\to \B{\Hil_B}$ is defined as
\begin{equation}
    \norm{\Phi}_{\diamond} := \sup_{\norm{X}_1\leq 1} \norm{\Phi_{A\to B}(X_{RA})}_1,
\end{equation}
where the supremum if over all $X\in \B{\Hil_R\otimes \Hil_A}$ with $d_R=d_A$ and $\norm{X}_1\leq 1$. We denote the operator norm of $X\in \B{\Hil_A}$ by $\norm{X}_{\infty}$. 

The max-relative entropy between two positive semi-definite operators $\rho, \sigma\in \B{\Hil_A}$ is defined as \cite{Datta2009max}:
\begin{equation}
    D_{\max} (\rho \Vert \sigma) := \log \inf \{ \lambda : \rho \leq \lambda \sigma \},
\end{equation}
where the infimum over an empty set is assumed to be $+\infty$. When $\supp \rho \subseteq \supp \sigma$, 
\begin{equation}\label{eq:Dmax}
D_{\max} (\rho \Vert \sigma) = \log \norm{\sigma^{-1/2} \rho \sigma^{-1/2} }_{\infty}.    
\end{equation}
The max-relative entropy is quasi-convex: for positive semi-definite operators $\{ \rho_i\}_i, \{\sigma_i\}_i \subset \B{\Hil_A}$:
\begin{align}
    D_{\max} \left( \sum_i \rho_i \bigg\| \sum_i \sigma_i \right) \leq \max_i D_{\max}( \rho_i \Vert \sigma_i).
    \label{eq:qconvex}
\end{align}
Moreover, if for each $i$, $\rho_i$ and $\sigma_i$ are supported on a subspace $\Hil_i\subseteq \Hil_A$ such that for $i\neq j$, $\Hil_i \perp \Hil_j$, the inequality above becomes an equality.

The fidelity between two quantum states $\rho, \sigma \in \State{\Hil_A}$ is defined as $F(\rho, \sigma):= \norm{\sqrt{\rho}\sqrt{\sigma}}^2_1$.

The $\epsilon-$hypothesis testing relative entropy between a state $\rho\in \State{\Hil_A}$ and a positive semi-definite operator $\sigma\in \B{\Hil_A}$ with $\epsilon\in [0,1]$ is defined as follows~\cite{WRenner2012hypo}:
\begin{equation}
    D^{\epsilon}_H (\rho \Vert \sigma) := - \log \beta^{\epsilon}_H (\rho \Vert \sigma), 
\end{equation}
where 
\begin{equation}
    \beta^{\epsilon}_H (\rho \Vert \sigma) := \inf_{0\leq \Lambda \leq \iden} \{\Tr \Lambda \sigma \, : \, \Tr \Lambda \rho \geq 1-\epsilon  \}.
\end{equation}

The $\epsilon-$hypothesis testing relative entropy of entanglement of a state $\rho_{AB}$ is defined as 
\begin{equation}\label{eq:EHstate}
    E_{H}^{\epsilon}(A:B)_{\rho} := \inf_{\sigma_{AB}\in \text{SEP}(A:B)} D_H^{\epsilon}(\rho_{AB}\Vert \sigma_{AB}),
\end{equation}
where the infimum is over the set $\text{SEP}(A:B)$ of all separable states in $ \State{\Hil_A\otimes \Hil_B}$.

The $\epsilon-$hypothesis testing relative entropy of entanglement of a channel $\Phi_{A\to B}$ is defined as 
\begin{equation}\label{eq:EHchannel}
    E_{H}^{\epsilon}(\Phi ) := \sup_{\rho_{RA}} E_{H}^{\epsilon}(R:B)_{\omega} = \sup_{\rho_{RA}} \inf_{\sigma_{RB}\in \operatorname{SEP}(R:B)} D_H^{\epsilon} (\Phi_{A\to B}(\rho_{RA}) \Vert \sigma_{RB}),
\end{equation}
where the supremum is over all states $\rho_{RA}$ and $\omega_{RB} = \Phi_{A\to B} (\rho_{RA})$. It can be shown that the supremum here can be restricted to pure states $\psi_{RA}$ with $d_R=d_A$.

\subsection{Classical communication}\label{subsec:cc}
An $(\mathscr{M},\epsilon)$ classical communication protocol with $\mathscr{M}\in \mathbb{N}$ and $\epsilon\in [0,1)$ for a channel $\Phi_{A\to B}$ consists of the following:
\begin{itemize}
    \item Encoding states $\rho^m_{A}$ that Alice uses to encode a message $m\in [\mathscr{M}] := \{1,2,\ldots ,\mathscr{M} \}$.
    \item Decoding POVM $\{\Lambda^m_B\}_{m \in [\mathscr{M}]}$ that Bob uses to decode the message,
\end{itemize}
such that for each message $m$,
\begin{equation}
    \Tr [\Lambda^m_{B}  (\Phi_{A\to B}(\rho^m_A)  ] \geq 1 -\epsilon.
\end{equation}
The one-shot $\epsilon-$error classical capacity of $\Phi$ is defined as
\begin{equation}
    C_{\epsilon}(\Phi):= \sup \{ \log \mathscr{M}: \exists (\mathscr{M},\bar{\epsilon}) \text{ classical communication protocol for } \Phi \text{ with } \bar{\epsilon}\leq \epsilon  \}.
\end{equation}

\begin{remark}
    In the literature, one-shot communication capacities are usually denoted as $C^{(1)}_{\epsilon}(\Phi)$. However, since all the capacities considered in this paper are one-shot,
    we omit the superscript $(1)$ for notational simplicity.
\end{remark}

\subsection{Private classical communication}
An $(\mathscr{M},\epsilon)$ private classical communication protocol through a channel $\Phi_{A\to B}$ consists of the following:
\begin{itemize}
    \item Encoding states $\rho^m_{A}$ that Alice uses to encode a message $m\in [\mathscr{M}]$.
    \item Decoding POVM $\{\Lambda^m_B\}_{m\in [\mathscr{M}]}$ with an associated channel $\mathcal{D}_{B\to M'}$ defined as \\ $\mathcal{D}(\cdot) = \sum_m \Tr (\Lambda^m_B (\cdot))\ketbra{m}_{M'}$ that Bob uses to decode the message,
\end{itemize}
such that for each message $m$,
\begin{equation}
    F(\ketbra{m}_{M'} \otimes \sigma_E , \mathcal{D}_{B\to M'}\circ \mathcal{V}_{A\to BE} (\rho^m_A)) \geq 1-\epsilon,
\end{equation}
where $\sigma_E$ is some fixed state independent of $m$ and $\mathcal{V}_{A\to BE}(\cdot ) = V (\cdot) V^{\dagger}$, where $V:\Hil_A\to \Hil_B \otimes \Hil_E$ is a Stinespring isometry of $\Phi_{A\to B}$. By using the data processing inequality for the fidelity function, it is easy to show that the above condition implies:
\begin{align}
    \forall m: \quad \Tr [\Lambda^m_{B}  (\Phi_{A\to B}(\rho^m_A)  ] &\geq 1 -\epsilon, \\
    F(\sigma_E, \Phi^c_{A\to E}(\rho^m_A)) &\geq 1-\epsilon,
\end{align}
where $\Phi^c_{A \to E}$ denotes a quantum channel which is complementary to $\Phi_{A \to B}.$
\medskip

The one-shot $\epsilon-$error private classical capacity of $\Phi$ is defined as
\begin{equation}
    C_{\epsilon}^{\text{p}}(\Phi):= \sup \{ \log \mathscr{M}: \exists (\mathscr{M},\bar{\epsilon}) \text{ private classical communication protocol for } \Phi \text{ with } \bar{\epsilon}\leq \epsilon  \}.
\end{equation}

\subsection{Entanglement-assisted classical communication}
An $(\mathscr{M},\epsilon)$ entanglement-assisted classical communication protocol through a channel $\Phi_{A\to B}$ consists of the following:
\begin{itemize}
    \item An entangled state $\psi_{A'B'}$ shared between Alice and Bob,
    \item Encoding channels $\mathcal{E}^m_{A'\to A}$ that Alice uses to encode a message $m\in [\mathscr{M}]$,
    \item Decoding POVM $\{\Lambda^m_{BB'}\}_{m\in [\mathscr{M}]}$ that Bob uses to decode the message,
\end{itemize}
such that for each message $m$, 
\begin{equation}
   \Tr [\Lambda^m_{BB'}  (\Phi_{A\to B} \circ \mathcal{E}^m_{A'\to A} (\psi_{A'B'}) ) ] \geq 1 -\epsilon.
\end{equation}
The one-shot $\epsilon-$error entanglement-assisted classical capacity of $\Phi$ is defined as
\begin{align}
    C_{\epsilon}^{\operatorname{ea}}(\Phi):= \sup \{ \log \mathscr{M}: \exists (\mathscr{M},\bar{\epsilon})& \text{ entanglement-assisted} \\ 
    &\text{classical communication protocol for } \Phi \text{ with } \bar{\epsilon}\leq \epsilon\}.
\end{align}

\subsection{Quantum communication}
A $(d,\epsilon)$ quantum communication protocol $(\mathcal{E}_{A'\to A}, \mathcal{D}_{B\to B'})$ for a channel $\Phi_{A\to B}$ consists of the following $(d=d_{A'}=d_{B'})$:
\begin{itemize}
    \item An encoding channel $\mathcal{E}_{A'\to A}$ that Alice uses to encode quantum information,
    \item A decoding channel $\mathcal{D}_{B\to B'}$ that Bob uses to decode the information,
\end{itemize}
such that for every pure state $\psi_{RA'}$ 
\begin{equation}
   \bra{\psi_{RB'}} \mathcal{D}_{B\to B'}\circ \Phi_{A\to B}\circ \mathcal{E}_{A'\to A}(\psi_{RA'})\ket{\psi_{RB'}} \geq  1-\epsilon.
\end{equation}
The one-shot $\epsilon-$error quantum capacity of $\Phi$ is defined as
\begin{equation}
    Q_{\epsilon}(\Phi):= \sup \{ \log d: \exists (d,\bar{\epsilon}) \text{ quantum communication protocol for } \Phi \text{ with } \bar{\epsilon}\leq \epsilon  \}.
\end{equation}

\subsection{Spectral properties of quantum channels}

Let $\Phi:\B{\Hil_A}\to \B{\Hil_{A}}$ be a quantum channel. Then, $\Phi$ admits a Jordan decomposition \cite[Chapter 6]{Wolf2012Qtour}
\begin{equation}
    \Phi = \sum_{i} \lambda_i \mathcal{P}_i + \mathcal{N}_i \quad \text{with} \quad \mathcal{N}_i \mathcal{P}_i = \mathcal{P}_i \mathcal{N}_i = \mathcal{N}_i \,\,\, \text{and} \,\,\, \mathcal{P}_i \mathcal{P}_j = \delta_{ij}\mathcal{P}_i,
\end{equation}
where the sum runs over the distinct eigenvalues $\lambda_i$ of $\Phi$, $\mathcal{P}_i$ are projectors (i.e.~$\mathcal{P}_i^2= \mathcal{P}_i$) whose rank equals the algebraic multiplicity of $\lambda_i$, and $\mathcal{N}_i$ denote the corresponding nilpotent operators.

All the eigenvalues $\lambda$ of $\Phi$ satisfy $\abs{\lambda}\leq 1$, and $\lambda=1$ is always an eigenvalue. Moreover, all eigenvalues $\lambda$ with $\abs{\lambda}=1$ have equal algebraic and geometric multiplicities, so that $\mathcal{N}_i=0$ for all such eigenvalues. As $n\to \infty$, we expect the image of 
\begin{equation}
    \Phi^n := \underbrace{\Phi \circ \Phi \circ \ldots \circ \Phi}_{n\,  \text{times}}
\end{equation}
to converge to the peripheral space $\chi (\Phi):= \text{span}\{X\in \B{\Hil_A} : \Phi(X)=\lambda X, |\lambda|=1 \}$. We define the asymptotic part of $\Phi$ and the projector onto the peripheral space, respectively, as follows: 
\begin{equation}\label{eq:phiinf-proj}
\Phi_{\infty}:= \sum_{i:\, |\lambda_i|=1}\lambda_i \mathcal{P}_i \quad \text{and} \quad \mathcal{P} = \sum_{i: \, |\lambda_i|=1} \mathcal{P}_i .
\end{equation}
Clearly, $\Phi_{\infty}=\Phi_{\infty}\circ \mathcal{P} = \mathcal{P}\circ \Phi_{\infty}$.
Notably, both $\Phi_{\infty}:\B{\Hil_A}\to \B{\Hil_A}$ and $\mathcal{P}:\B{\Hil_A}\to \B{\Hil_A}$ arise as limit points of the set $(\Phi^n )_{n\in \mathbb{N}}$ \cite[Lemma 3.1]{Szehr2014specconvergence}. Since the set of quantum channels acting on $\Hil_A$ is closed, both $\Phi_{\infty}$ and $\mathcal{P}$ are quantum channels themselves. As $n$ increases, $\norm{\Phi^n - \Phi^n_{\infty}}_{\diamond}$ approaches zero. More precisely, the convergence behavior is like 
\begin{equation}\label{eq:converge}
    \norm{\Phi^n - \Phi^n_{\infty}}_{\diamond} \leq \kappa \mu^n,
\end{equation}
where $\mu = \operatorname{spr}(\Phi-\Phi_{\infty})<1$ is the spectral radius of $\Phi-\Phi_{\infty}$ (i.e. $\mu$ is the largest magnitude of the eigenvalues of $\Phi-\Phi_{\infty}$) and $\kappa$ depends on the spectrum of $\Phi$, on $n$, and on the dimension $d_A$ of $\Hil_A$ \cite{Szehr2014specconvergence}. The dependence of $\kappa$ on $n$ is sub-exponential, which captures the fact that for large $n$, the convergence is governed by an exponential decay as $\mu^n$. It is known that there exists a decomposition $\Hil_A = \Hil_{0} \oplus \bigoplus_{k=1}^K \Hil_{k,1}\otimes \Hil_{k,2}$ and positive definite states $\delta_{k}\in \State{\Hil_{k,2}}$ such that \cite[Chapter 6]{Wolf2012Qtour}: 
    \begin{equation}\label{eq:phasespace}
        \chi (\Phi) = 0 \oplus \bigoplus_{k=1}^K (\B{\Hil_{k,1}}\otimes \delta_k). 
    \end{equation}
Moreover, there exist unitaries $U_k\in \B{\Hil_{k,1}}$ and a permutation $\pi$ which permutes within subsets of $\{1,2,\ldots ,K \}$ for which the corresponding $\Hil_{k,1}$'s have the same dimension, such that for any
    \begin{equation}\label{eq:phaseaction}
        X = 0 \oplus \bigoplus_{k=1}^K x_k \otimes \delta_k, \quad\text{we have}\quad\Phi (X) = 0 \oplus \bigoplus_{k=1}^K U^{\dagger}_k x_{\pi (k)} U_k \otimes \delta_k .
    \end{equation}
Given a channel $\Phi:\B{\Hil_A}\to \B{\Hil_A}$, the structure of its peripheral space $\chi (\Phi)$, i.e., the block dimensions $d_k=\dim \Hil_{k,1}$ and the states $\delta_k$ in Eq.~\eqref{eq:phasespace}, can be efficiently computed (see \cite[Section 4]{fawzi2024error} and references therein).


\section{Main result}
\label{sec:Main}

We can now state and prove our main result.

\begin{theorem}\label{theorem:main}
Let $\Phi:\B{\Hil_A}\to \B{\Hil_A}$ be a quantum channel, $(\Phi^n)_{n\in \mathbb{N}}$ be the associated dQMS, and $\epsilon\in [0,1)$. Then, for all $n\in\mathbb{N}$, the one-shot $\epsilon-$error capacities satisfy:
\begin{align}   Q_{\epsilon}(\Phi^n) &\geq \log (\max_k d_k ), \label{qlo}\\
C_{\epsilon}^{\operatorname{p}} (\Phi^n) &\geq \log (\max_k d_k), \label{cplo}\\\
C_{\epsilon}(\Phi^n) &\geq \log (\sum_k d_k ), \label{clo}\\
C_{\epsilon}^{\operatorname{ea}}(\Phi^n) &\geq \log (\sum_k d^2_k ).\label{cealo}
\end{align}
Moreover, for $n$ large enough, the following converse bounds hold:
\begin{align}
      Q_{\epsilon}(\Phi^n) &\leq \log (\max_k d_k) + \log(\frac{1}{1-\epsilon-\kappa\mu^n}), \label{eq:Qconverse}\\ 
      C_{\epsilon}^{\operatorname{p}}(\Phi^n) &\leq \log (\max_k d_k ) + \log(\frac{1}{1-\epsilon-\kappa\mu^n}) \label{eq:Pconverse} \\
     C_{\epsilon}(\Phi^n) &\leq  \log (\sum_k d_k) + \log(\frac{1}{1-\epsilon-\kappa\mu^n}), \label{eq:Cconverse}\\
     C_{\epsilon}^{\operatorname{ea}}(\Phi^n) &\leq  \log (\sum_k d_k^2 ) + \log(\frac{1}{1-\epsilon-\kappa\mu^n}).\label{eq:Ceaconverse}
\end{align}
Here, $d_k=\dim \Hil_{k,1}$ for $k\in \{1,2,\ldots ,K \}$ are the block dimensions in the decomposition of $\chi (\Phi)$ (see Eq.~\eqref{eq:phasespace}), $\mu, \kappa$ govern the convergence $\norm{\Phi^n-\Phi^n_{\infty}}_{\diamond} \leq \kappa\mu^n \to 0$ as $n\to \infty$ (see Eq.~\eqref{eq:converge}), 
and $n$ is large enough so that $\epsilon + \kappa\mu^n <1$.
\end{theorem}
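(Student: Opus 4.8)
The plan is to prove the achievability (lower) and converse (upper) bounds separately, the common structural input being Eqs.~\eqref{eq:phasespace}--\eqref{eq:phaseaction}: on $\chi(\Phi)$ the channel acts by a \emph{dimension-preserving} permutation $\pi$ of the blocks together with unitary conjugations $U_k$, and—crucially—any input already lying in $\chi(\Phi)$ stays there and evolves under $\Phi$ \emph{exactly} as under $\Phi_{\infty}$, with no decaying contribution. Hence all achievability codes can be made exactly error-free for every $n$, which is why the lower bounds hold for all $\epsilon\in[0,1)$ and all $n$. The converse, by contrast, uses the quantitative convergence $\norm{\Phi^n-\Phi_{\infty}^n}_{\diamond}\le\kappa\mu^n$ of Eq.~\eqref{eq:converge} to replace $\Phi^n$ by $\Phi_{\infty}^n$ at the cost of inflating the error to $\epsilon':=\epsilon+\kappa\mu^n$, so that $X_{\epsilon}(\Phi^n)\le X_{\epsilon'}(\Phi_{\infty}^n)$ for each capacity $X$; the residual $\log\frac{1}{1-\epsilon-\kappa\mu^n}$ is precisely the price of this error inflation, and the hypothesis $\epsilon+\kappa\mu^n<1$ is what keeps it finite.

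For the lower bounds I would write down explicit codes supported in $\chi(\Phi)$. For $C_{\epsilon}$ I encode the $\sum_k d_k$ messages as the mutually orthogonal states $\ketbra{i}_{\Hil_{k,1}}\otimes\delta_k$ ($k\in[K]$, $i\in[d_k]$); since block $k$ is mapped bijectively onto the known block $\pi^{-n}(k)$ by a unitary and distinct blocks stay orthogonal, all $\sum_k d_k$ outputs remain mutually orthogonal and are perfectly distinguished by a projective decoder. For $Q_{\epsilon}$ (and $C_{\epsilon}^{\operatorname{p}}$) I pick the largest block $k^\star$ and encode a $d_{k^\star}$-dimensional system by $\rho\mapsto\rho\otimes\delta_{k^\star}$; the channel carries it unitarily into block $\pi^{-n}(k^\star)$, so the decoder projects onto that block, discards the fixed $\delta$-factor and inverts the unitary, recovering $\rho$ perfectly and preserving entanglement with any reference, and because the single-block evolution is isometric its complementary channel sends every message to the same environment state, giving privacy as well. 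For $C_{\epsilon}^{\operatorname{ea}}$ I use block-wise dense coding: Alice and Bob pre-share $\bigotimes_k\ket{\Phi^+_{d_k}}$, and to send $(k,a)$ Alice applies one of the $d_k^2$ Heisenberg--Weyl operators to her half of the $k$-th pair and routes only that half into block $k$; Bob reads off $k$ from the known output block $\pi^{-n}(k)$, then performs a Bell measurement, distinguishing all $\sum_k d_k^2$ messages with zero error.

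For the converse I would, after the reduction to $\Phi_{\infty}^n$, exploit that every output $\Phi_{\infty}^n(\rho)$ is block-diagonal, $\bigoplus_k \tau_k\otimes\delta_k$ with $\tau_k\in\B{\Hil_{k,1}}$, i.e.\ it lives in the algebra $\bigoplus_k\B{\Hil_{k,1}}$ of dimension $\sum_k d_k^2$. For $C_{\epsilon}$ I use the one-shot meta-converse: any $(\mathscr{M},\epsilon')$ code gives $\beta^{\epsilon'}_H(\Phi_{\infty}^n(\rho^m)\Vert\omega)\le\Tr[\Lambda^m\omega]$ for a fixed reference $\omega$; choosing $\omega\propto\bigoplus_k\iden_{\Hil_{k,1}}\otimes\delta_k$ gives $D_{\max}(\Phi_{\infty}^n(\rho^m)\Vert\omega)\le\log\sum_k d_k$, so combining $D_H^{\epsilon'}\le D_{\max}+\log\frac{1}{1-\epsilon'}$ with $\sum_m\Tr[\Lambda^m\omega]\le1$ yields $\log\mathscr{M}\le\log(\sum_k d_k)+\log\frac{1}{1-\epsilon'}$. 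The $C^{\operatorname{ea}}_{\epsilon}$ bound follows the same template through $E_H^{\epsilon}$ of Eq.~\eqref{eq:EHchannel}: I bound $C^{\operatorname{ea}}_{\epsilon'}(\Phi_{\infty}^n)\le E_H^{\epsilon'}(\Phi_{\infty}^n)$ by the entanglement-assisted meta-converse, then for the optimal pure $\psi_{RA}$ choose a \emph{separable} reference $\sigma_{RB}\propto\iden_R\otimes\bigl(\bigoplus_k\iden_{\Hil_{k,1}}\otimes\delta_k\bigr)$, whose block-$k$ effective rank $d_k^2$ gives $D_{\max}\le\log\sum_k d_k^2$. For $Q_{\epsilon}$ and $C^{\operatorname{p}}_{\epsilon}$ the sharper factor $\max_k d_k$ comes from decoupling: block-diagonality copies the block label to the environment, so $\Phi_{\infty}^n$ is classical across blocks and coherent (respectively private) information cannot be spread over blocks; the entanglement-transmission fidelity to $\Phi^+_d$ forces the block-diagonal $\omega_{RB}$—of per-block Schmidt rank $\le d_k$—to be close to maximally entangled, and the same $D_{\max}$ estimate restricted to the optimal single block gives $\log d\le\log\max_k d_k+\log\frac{1}{1-\epsilon'}$.

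The main obstacle I anticipate is the converse, in two places. First, the reduction $\Phi^n\rightsquigarrow\Phi_{\infty}^n$ must be carried out for each success criterion (trace overlap for $C,C^{\operatorname{ea}}$; entanglement/privacy fidelity for $Q,C^{\operatorname{p}}$) and shown to cost at most $\kappa\mu^n$ in the relevant figure of merit. Second, and harder, is the $\max_k d_k$ bound for $Q_{\epsilon}$ and $C^{\operatorname{p}}_{\epsilon}$: one must convert the heuristic ``the block label leaks to the environment, confining information to one block'' into a clean one-shot inequality with exactly the stated constant, which requires the right choice of reference/separable state in the $D_{\max}$ estimate and care with the $\delta_k$-factors, which carry no message but do feed the environment.
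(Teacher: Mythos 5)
Your achievability constructions and your classical-capacity converse are sound and essentially coincide with the paper's: the paper also uses the orthogonal encodings $\ketbra{i_k}\otimes\delta_k$ for $C_{\epsilon}$, block-wise dense coding for $C^{\operatorname{ea}}_{\epsilon}$, and zero-error quantum codes inside the largest block (via the operator error-correction results of Kribs et al.\ rather than your explicit block-tracking inversion, but the content is the same), with privacy following from exactly the decoupling fact you invoke (this is the paper's Lemma~\ref{lemma:Q<P}, $Q_0\le C^{\operatorname{p}}_0$). Your meta-converse for $C_{\epsilon}$ with reference $\omega\propto\bigoplus_k\iden_{k}\otimes\delta_k$, combined with $D_H^{\epsilon'}\le D_{\max}+\log\frac{1}{1-\epsilon'}$, is the same computation the paper performs after invoking Lemma~\ref{lemma:1shot-converse}, just made self-contained.

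The genuine gap is your entanglement-assisted converse. The claimed inequality $C^{\operatorname{ea}}_{\epsilon'}(\Phi_{\infty}^n)\le E_H^{\epsilon'}(\Phi_{\infty}^n)$, with $E_H$ as in Eq.~\eqref{eq:EHchannel} (infimum over \emph{separable} references), is false. Counterexample: for the identity channel on $\mathbb{C}^d$, dense coding gives $C^{\operatorname{ea}}_{0}(\id)=2\log d$, whereas for any pure $\psi_{RB}$ with Schmidt coefficients $\lambda_i$ one has $\inf_{\sigma\in\operatorname{SEP}}D_{\max}(\psi_{RB}\Vert\sigma_{RB})=2\log(\sum_i\lambda_i)\le\log d$ (the computation in Lemma~\ref{lemma:Emax}), so $E_H^{\epsilon}(\id)\le\log d+\log\frac{1}{1-\epsilon}<2\log d$ for small $\epsilon$. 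Separable-reference hypothesis testing bounds the \emph{unassisted and private} capacities—indeed that is precisely where the paper uses $E_H^{\epsilon}$, namely $C^{\operatorname{p}}_{\epsilon}\le E_H^{\epsilon}$ (Lemma~\ref{lemma:P1shot-converse})—while entanglement assistance requires the product-reference meta-converse $C^{\operatorname{ea}}_{\epsilon}(\Phi)\le\sup_{\psi}\inf_{\sigma}D_{\max}(\Phi_{A\to B}(\psi_{RA})\Vert\psi_R\otimes\sigma_B)+\log\frac{1}{1-\epsilon}$, which the paper evaluates at the maximally entangled input. There is a second, independent problem: even granting some meta-converse, your uniform reference $\iden_R\otimes(\bigoplus_k\iden_k\otimes\delta_k)$ only yields $\log\left((\sum_k d_k)\max_k d_k\right)$, which is in general \emph{larger} than $\log\sum_k d_k^2$; obtaining the stated constant requires weighting the blocks non-uniformly, $\lambda_k=d_k^2/\sum_{k'}d_{k'}^2$, as the paper does. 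Finally, your $Q_{\epsilon}$ and $C^{\operatorname{p}}_{\epsilon}$ converses remain heuristic (as you acknowledge); the clean route is again the $D_{\max}$ meta-converse—with reference $\iden_R\otimes\sigma_A$ for $Q_{\epsilon}$ and separable $\sigma_{RA}$ for $C^{\operatorname{p}}_{\epsilon}$, chosen block-diagonal with weights matching the output's block weights—reduced to a single block via quasi-convexity (Eq.~\eqref{eq:qconvex}) and Lemma~\ref{lemma:infmax}, followed by the per-block bounds $\log d_k$ (resp.\ Lemma~\ref{lemma:Emax}).
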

Before proving the theorem, let us discuss some of its consequences. 

\subsection{Infinite-time capacities}
Firstly, the theorem allows us to easily characterize the infinite-time capacities of any dQMS by taking the limit $n\to \infty$.

\begin{corollary}\label{corollary:main}
For a channel $\Phi:\B{\Hil_A}\to \B{\Hil_A}$ and $\epsilon\in [0,1)$, the following holds true:
    \begin{align}
     \log (\max_k d_k) \leq\lim_{n\to \infty } Q_{\epsilon}(\Phi^n) &\leq  \log (\max_k d_k) + \log (\frac{1}{1-\epsilon}) \label{eq:Qinfty}  \\ 
     \log (\max_k d_k) \leq\lim_{n\to \infty } C^{\operatorname{p}}_{\epsilon}(\Phi^n) &\leq  \log (\max_k d_k) + \log (\frac{1}{1-\epsilon})  \\ 
     \log\left(\sum_k d_k\right)\leq \lim_{n\to \infty } C_{\epsilon}(\Phi^n) &\leq \log\left(\sum_k d_k\right) + \log (\frac{1}{1-\epsilon}) \label{eq:Cinfty} \\
     \log\left(\sum_k d^2_k\right)\leq \lim_{n\to \infty } C^{\operatorname{ea}}_{\epsilon}(\Phi^n) &\leq \log\left(\sum_k d^2_k\right) + \log (\frac{1}{1-\epsilon}),
\end{align}
where $d_k=\dim \Hil_{k,1}$ for $k\in \{1,2,\ldots ,K \}$ are the block dimensions in the decomposition of $\chi (\Phi)$.
\end{corollary}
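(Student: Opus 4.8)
The plan is to derive the corollary directly from Theorem~\ref{theorem:main} by letting $n\to\infty$ in its two families of bounds. The crucial structural point is that the quantities $\log(\max_k d_k)$, $\log(\sum_k d_k)$, and $\log(\sum_k d_k^2)$ are determined solely by the peripheral space $\chi(\Phi)$ and are therefore independent of $n$; only the additive correction in the converse bounds carries an $n$-dependence.

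For the left-hand (achievability) inequalities, I would simply observe that \eqref{qlo}--\eqref{cealo} hold for \emph{every} $n\in\mathbb{N}$ with $n$-independent right-hand sides. Consequently each such constant is a lower bound for $\liminf_{n\to\infty}$ of the corresponding capacity, which yields the left inequalities.

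For the right-hand (converse) inequalities, I would start from \eqref{eq:Qconverse}--\eqref{eq:Ceaconverse}, valid for all $n$ large enough that $\epsilon+\kappa\mu^n<1$. The only $n$-dependent term is $\log\!\big(\tfrac{1}{1-\epsilon-\kappa\mu^n}\big)$, and the key observation is that $\kappa\mu^n\to0$ as $n\to\infty$: although $\kappa$ depends on $n$, this dependence is sub-exponential while $\mu=\operatorname{spr}(\Phi-\Phi_\infty)<1$ makes $\mu^n$ decay exponentially, so the product vanishes (this is precisely \eqref{eq:converge}). Since $x\mapsto\log\tfrac{1}{1-\epsilon-x}$ is continuous at $x=0$ whenever $\epsilon\in[0,1)$, the correction tends to $\log\tfrac{1}{1-\epsilon}$, and taking $\limsup_{n\to\infty}$ of each converse bound gives the right inequalities.

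To justify writing $\lim$ rather than merely sandwiching $\liminf$ and $\limsup$, I would show each capacity is \emph{monotonically non-increasing} in $n$. Since $\Phi^{n+1}=\Phi\circ\Phi^n$ is a post-processing of $\Phi^n$, any protocol for $\Phi^{n+1}$ induces a protocol for $\Phi^n$ with the same message count (or dimension) and error: Bob absorbs the extra copy of $\Phi$ into his decoding, which for the classical, private, and entanglement-assisted cases amounts to replacing the POVM $\{\Lambda^m\}$ by $\{\Phi^*(\Lambda^m)\}$ (again a POVM, as $\Phi^*$ is unital and completely positive), and for the quantum case amounts to replacing $\mathcal{D}$ by $\mathcal{D}\circ\Phi$. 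For the private capacity one further notes that the environment of $\Phi^{n+1}$ contains that of $\Phi^n$, so discarding the surplus environment can only strengthen the privacy condition by data processing for the fidelity. A monotone sequence bounded below therefore converges, so the limit exists and lies between the two bounds. The only genuine subtlety is the convergence $\kappa\mu^n\to0$ despite the $n$-dependent prefactor; the remaining steps are routine.
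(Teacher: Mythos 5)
Your proposal is correct and takes essentially the same route as the paper: the corollary is obtained by letting $n\to\infty$ in the achievability bounds \eqref{qlo}--\eqref{cealo} (whose right-hand sides are $n$-independent) and in the converse bounds \eqref{eq:Qconverse}--\eqref{eq:Ceaconverse}, using that $\kappa\mu^n\to 0$ so the correction term tends to $\log\frac{1}{1-\epsilon}$. Your monotonicity argument --- that any protocol for $\Phi^{n+1}=\Phi\circ\Phi^n$ induces one for $\Phi^n$ by absorbing the extra $\Phi$ into the decoder (with the fidelity data-processing step handling the environment in the private case) --- is precisely the content of the paper's Lemma~\ref{lemma:bottleneck} (and Lemma~\ref{lemma:Ebottleneck}) specialized to $\Psi=\Phi^n$, and it correctly supplies the existence of the limits, a point the paper leaves implicit.
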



\begin{remark}
    Eqs.~\eqref{eq:Qinfty} and \eqref{eq:Cinfty} were independently proved in \cite{singh2024zero} (for the $\epsilon=0$ case) and in \cite{fawzi2024error} (for arbitrary $\epsilon\in [0,1)$). The $\epsilon=0$ case of Eq.~\eqref{eq:Cinfty} is also proved in \cite{guan2016zero}. 
\end{remark}


\subsection{Rate of convergence}\label{subsec:convergence}
Given the infinite time capacities of a dQMS as in Corollary~\ref{corollary:main}, it is natural to ask how quickly do the capacities converge to their infinite time values. According to Theorem~\ref{theorem:main}, the rate of convergence crucially depends on the numbers $\kappa, \mu$. From \cite[Corollary 4.4]{Szehr2014specconvergence}, one can show that any dQMS $(\Phi^n)_{n\in \mathbb{N}}$ acting on a $d-$dimensional memory has $\kappa = O(n^{d^2})$ as $d\to \infty$, assuming that $\mu=\operatorname{spr}(\Phi-\Phi_{\infty})$ is bounded away from 1 as $d\to \infty$. Thus, we can say that the noise $\Phi$ `reaches' its infinite-time capacity when $n$ is large enough so that $\mu^n n^{d^2} \leq \delta$ for some threshold $\delta<1$, which happens when $$n \geq C(d^2 \log d +\log 1/\delta)/ (\log 1/\mu).$$ This shows that after  time $O(d^2 \log d)$ (i.e.~exponential in the number of qubits in memory), the noise reaches its infinite-time capacity. Note that in the special case of $\epsilon=0$, it is possible to obtain a slightly stronger convergence estimate, namely that all the zero-error capacities stabilize after time $n\geq d^2$ (see \cite[Theorem I.4]{singh2024zero}):
\begin{equation}
    \forall \bar{n} \geq d^2: \quad Q_0(\Phi^{\bar{n}}) = \lim_{n\to \infty} Q_0(\Phi^n) = \log \max_k d_k.
\end{equation}
We expect that the stated $O(d^2)=O(2^{2m})$ bound (where $m$ is the number of qubits in memory) cannot be improved in general (see, for e.g., the discussion in \cite{fawzi2022lower}). However, if the noise acts independently and identically on each subsystem in the memory, we can obtain a much stronger convergence estimate, as is shown below.

\subsection{IID noise}
Suppose that the memory $A$ is comprised of $m$ identical subsytems $B$ (e.g. qubits), so that the Hilbert space $\Hil_A$ factors as $\Hil_A = \Hil_B ^{\otimes m}$, and that the noise acts independently and identically, i.e., $\Phi:\cL (\cH_A)\to \cL (\cH_A)$ is of the form $\Phi = \Psi^{\otimes m}$ for some channel $\Psi:\cL (\Hil_B) \to \cL (\Hil_B)$. As the peripheral space is multiplicative under tensor product $\chi(\Phi) = \chi (\Psi)^{\otimes m}$, we get that the infinite-time capacities are additive under tensor product (see \cite[Section 3]{fawzi2024error}): 
\begin{align*}
  \forall m \in \mathbb{N}: \quad \log \max_{k} d_k \leq \lim_{n \to \infty} \frac{1}{m} Q_{\epsilon}((\Psi^{\otimes m})^n) \leq \log \max_{k} d_k + \frac{1}{m}\log(\frac{1}{1-\epsilon}),
\end{align*}
where the integers $d_k$ come from the block decomposition of $\chi (\Psi)$. Similar
results hold for all other the capacities as well. Moreover, the asymptotic part of $\Phi$ also factors like $\Phi_{\infty}=\Psi_{\infty}^{\otimes m}$, so that
\begin{align}
    \norm{\Phi^n - \Phi^n_{\infty}}_{\diamond} &= \norm{(\Psi^{\otimes m})^n - (\Psi_{\infty}^{\otimes m})^n}_{\diamond} \nonumber \\ 
    &= \norm{(\Psi^n)^{\otimes m} - (\Psi_{\infty}^n)^{\otimes m}}_{\diamond} \nonumber \\
    &\leq m \norm{\Psi^n - \Psi_{\infty}^n}_{\diamond} \leq m \kappa \mu^n,
\end{align}
where $\kappa, \mu=\operatorname{spr}(\Psi-\Psi_{\infty})$ govern the asymptotic behaviour of $\Psi$, and are hence independent of $m$. From \cite[Corollary 4.4]{Szehr2014specconvergence}, $\kappa \leq M n^{d_B^2}$, where $M$ is a constant that depends on $
\mu$ and $d_B$. Thus, in this setting, the noise reaches its infinite-time capacity when $n$ is large enough so that $mn^{d_B^2}\mu^n\leq \delta$ for a given threshold $\delta<1$, which happens when 
$$n \geq C(\log m + d_B^2 \log d_B +\log 1/\delta)/ (\log 1/\mu).$$
Note that the convergence here is incredibly rapid: for a memory comprised of $m$ qubits undergoing IID noise, the infinite time capacity is reached after time $O(\log m)$, where as in the general non-IID case, it takes time $O(2^{2m})$ to do so.


\medskip

\subsection*{Proof of Theorem~\ref{theorem:main}}  We start by proving the achievability bounds in Eqs.~\eqref{qlo}-\eqref{cealo}.
\medskip

\noindent
{\bf{Achievability: Quantum communication~\eqref{qlo}}}
\smallskip

Note that the action of a channel $\Phi$ on its peripheral space $\chi (\Phi)$ is reversible \cite[Theorem 6.16]{Wolf2012Qtour}, i.e., there exists a channel $\mathcal{R}:\B{\Hil_A}\to \B{\Hil_A}$ such that $\mathcal{R}\circ \Phi = \mathcal{P}$ (which implies that $\mathcal{R}^n\circ \Phi^n = \mathcal{P}$), where $\mathcal{P}$ is the projection onto the peripheral space (Eq.~\eqref{eq:phiinf-proj}). Thus, in the language of \cite{kribs2006error}, all the $\Hil_{k,1}$ sectors in the decomposition in Eq.~\eqref{eq:phasespace} are correctable for $\Phi^n$ for all $n\in \mathbb{N}$. Corresponding subspaces $\mathcal{C}_k \subseteq \Hil_A$ with $\dim \mathcal{C}_k = \dim{\Hil_{k,1}} = d_k$ can be constructed using \cite[Theorem 3.7]{kribs2006error} that satisfy the so-called Knill-Laflamme error-correction conditions for $\Phi^n$ for all $n\in \mathbb{N}$ \cite{knil-laf}, i.e. for all $n\in \mathbb{N}$ and $k$, $\exists$ channels $\mathcal{R}_{n,k}:\B{\Hil_A}\to \B{\Hil_A}$ such that 
    \begin{equation}
       \forall \rho\in \State{\mathcal{C}_k}: \qquad (\mathcal{R}_{n,k}\circ \Phi^n) (V_k\rho V_k^{\dagger}) = V_k\rho V_k^{\dagger} ,
    \end{equation}
    where $V_k:\mathcal{C}_k \hookrightarrow \Hil_A$ is the canonical embedding of $\mathcal{C}_k$ into $\Hil_A$.
    Thus, by choosing the encoding $\mathcal{E}_k:\B{\mathcal{C}_k} \to \B{\Hil_A}$ and decoding $\mathcal{D}_{n,k}:\B{\Hil_A}\to \B{\mathcal{C}_k}$ as follows: 
    \begin{equation}
        \mathcal{E}_k(\cdot)=V_k(\cdot)V_k^{\dagger}, \quad\text{and}\quad \mathcal{D}_{n,k}(\cdot)= V_k^{\dagger} \mathcal{R}_{n,k}(\cdot) V_k + \Tr [(\iden - V_kV^{\dagger}_k) \mathcal{R}_{n,k}(\cdot) ]\sigma_k,
    \end{equation}
    where $\sigma_k\in \State{\mathcal{C}_k}$ is some state, 
    we see that $\mathcal{D}_{n,k}\circ \Phi^n \circ \mathcal{E}_{k} = \id_{\mathcal{C}_k}$, so that $(\mathcal{E}_k,\mathcal{D}_{n,k})$ forms a $(d_k,\epsilon)$ quantum communication protocol for $\Phi^n$ with $\epsilon=0$. Hence,
    \begin{equation}
     \forall n\in \mathbb{N}, \epsilon\in [0,1): \quad   \log \max_k d_k \leq Q_0 (\Phi^n) \leq Q_{\epsilon}(\Phi^n). 
    \end{equation}
\medskip

\noindent
{\bf{Achievability: Private classical communication~\eqref{cplo}}}
\smallskip

For private classical communication, note that $Q_0 (\Phi)\leq C^{\operatorname{p}}_0 (\Phi)$ holds for any channel $\Phi$ (see Lemma~\ref{lemma:Q<P}), so that
    \begin{equation}
    \forall n\in \mathbb{N}, \epsilon\in [0,1): \quad   \log \max_k d_k \leq Q_0 (\Phi^n) \leq C^{\operatorname{p}}_0 (\Phi^n) \leq  C^{\operatorname{p}}_{\epsilon}(\Phi^n). 
    \end{equation}
  \medskip

\noindent
{\bf{Achievability: Classical communication~\eqref{clo}}}
\smallskip

    For classical communication, we can send $\sum_{k=1}^K d_k$ messages perfectly (i.e. with $\epsilon=0$ error) through $\Phi^n$ for all $n$ by using the encoding states $\{ \ketbra{i_k} \otimes \delta_k \}$ for $k=1,2,\ldots ,K$ and $i_k = 1,2,\ldots , d_k$, where $\ketbra{i_k}$ are the diagonal matrix units in $\B{\Hil_{k,1}}$ and $\delta_k$ are given in Eq.~\eqref{eq:phasespace}. Note that for each $k$, the state $\ketbra{i_k}\otimes \delta_k$ is supported only on $\Hil_{k,1}\otimes \Hil_{k,2}$. From the action of $\Phi$ on its peripheral space (see Eq.~\eqref{eq:phaseaction}), it is clear that the outputs of these states under $\Phi^n$ are mutually orthogonal for all $n$ and hence, are perfectly distinguishable. Hence, 
    \begin{equation}
    \forall n\in \mathbb{N}, \epsilon\in [0,1): \quad \log \left(\sum_{k=1}^K d_k\right)\leq C_0(\Phi^n)\leq  C_{\epsilon} (\Phi^n).
    \end{equation}
  \medskip

\noindent
{\bf{Achievability: Entanglement-assisted classical communication~\eqref{cealo}}}
 \smallskip

    With entanglement assistance, we can perfectly transmit $\sum_{k=1}^K d^2_k$ classical messages through $\Phi^n$ for all $n$. To see this, we start with an entangled state 
    \begin{equation}
        \psi_{A'A} = \frac{1}{K} \oplus_{k=1}^K (\psi^+_k \otimes \delta_k) \in \State{\Hil_{A'}\otimes \Hil_A} = \State{\oplus_k (\Hil_{A'} \otimes \Hil_{k,1}\otimes \Hil_{k,2}) },
    \end{equation}
    where $\psi^+_k\in \State{\Hil_{A'}\otimes \Hil_{k,1}}$ are maximally entangled states of Schmidt rank $d_k$, where $d_k = {\rm{dim}} \, \Hil_{k,1}$, and $\delta_k$ are the positive definite states given in Eq.~\eqref{eq:phasespace}. For each $k$, we apply an orthogonal set of unitary operators in $\B{\Hil_{k,1}}$ locally on $\Hil_{k,1}$ to encode $d_k^2$ many classical messages in orthogonal states\footnote{This is exactly the encoding scheme employed in the superdense coding protocol \cite{Bennett1992dense, Werner2001dense}.}, thus encoding $\sum_k d_k^2$ messages in total. The permutation+unitary action of $\Phi$ on its peripheral space (Eq.~\eqref{eq:phaseaction}) ensures that these states remain orthogonal (and hence perfectly distinguishable) after the action of $\Phi^n_{A\to A}$ for all $n$. Thus,
    \begin{equation}
    \forall n\in \mathbb{N}, \epsilon\in [0,1): \quad \log \left(\sum_{k=1}^K d^2_k\right)\leq  C_0^{\operatorname{ea}}(\Phi^n)\leq  C_{\epsilon}^{\operatorname{ea}} (\Phi^n).
    \end{equation}
\medskip

Next, we prove the converse bounds in Eqs.~\eqref{eq:Qconverse}-\eqref{eq:Ceaconverse}.
\smallskip

The proofs of these bounds for the quantum, classical, and entanglement-assisted classical capacities start similarly and so we consider them together below. Let us fix $\epsilon\in [0,1)$. Note that $\norm{\Phi^n-\Phi^n_{\infty}}_{\diamond} \to 0$ as $n\to \infty$ and the convergence behaves like $\norm{\Phi^n-\Phi^n_{\infty}}_{\diamond}\leq \kappa\mu^n$, so that for $n$ large enough such that $\epsilon+\kappa\mu^n < 1$, we can use Lemmas~\ref{lemma:epsilon-delta} and  \ref{lemma:bottleneck}, and the fact that $\Phi^n_{\infty}=\Phi^n_{\infty} \circ \mathcal{P} = \mathcal{P}\circ \Phi^n_{\infty}$ for all $n$ to write 
\begin{align}
    Q_{\epsilon}(\Phi^n) &\leq Q_{\epsilon + \kappa\mu^n} (\Phi^n_{\infty}) \leq Q_{\epsilon + \kappa\mu^n} (\mathcal{P}),  \\
    C_{\epsilon}(\Phi^n) &\leq C_{\epsilon + \kappa\mu^n} (\Phi^n_{\infty}) \leq C_{\epsilon + \kappa\mu^n} (\mathcal{P}), \\
    C_{\epsilon}^{\operatorname{ea}}(\Phi^n) &\leq C^{\operatorname{ea}}_{\epsilon + \kappa\mu^n} (\Phi^n_{\infty}) \leq C^{\operatorname{ea}}_{\epsilon + \kappa\mu^n} (\mathcal{P}).
\end{align}
Recall that $\mathcal{P}:\B{\Hil_A}\to \B{\Hil_A}$ is the channel that projects onto the peripheral space (Eq.~\eqref{eq:phasespace})
\begin{equation}
\chi (\Phi) = 0 \oplus \bigoplus_{k=1}^K (\B{\Hil_{k,1}}\otimes \delta_k),
\end{equation}
where the direct sum is with respect to the decomposition $\Hil_A = \Hil_{0} \oplus \bigoplus_{k=1}^K \Hil_{k,1}\otimes \Hil_{k,2}$. For our purposes, we can assume that $\Hil_0$ is the zero subspace, so that the action of $\mathcal{P}$ becomes
\begin{align}\label{eq:phase-proj}
  \forall X\in \B{\Hil_A}: \quad \mathcal{P}(X) &= \bigoplus_{k=1}^K \Tr_2 (P_k X P_k) \otimes \delta_k, 
\end{align}
where $P_k\in \B{\Hil_A}$ is the orthogonal projection that projects onto the block $\Hil_{k,1}\otimes \Hil_{k,2}$ and $\Tr_2$ denotes the partial trace over $\Hil_{k,2}$. We refer the readers to Appendix~\ref{appen:Hil0} for justification of this assumption.
\medskip

\noindent
{\bf{Converse: Quantum communication~\eqref{eq:Qconverse}}}
\smallskip

For the quantum capacity, Lemma~\ref{lemma:1shot-converse} shows that 
\begin{align}
    Q_{\epsilon + \kappa\mu^n}(\mathcal{P}) \leq \sup_{\psi_{RA}} \inf_{\sigma_A} D_{\max}(\mathcal{P}_{A\to A}(\psi_{RA}) || \iden_R \otimes \sigma_A ) + \log(\frac{1}{1-\epsilon - \kappa\mu^n}). 
\end{align}
We bound the first term above as follows. For a pure state $\psi_{RA}$, we use Eq.~\eqref{eq:phase-proj} to write
\begin{equation}
    \mathcal{P}_{A\to A}(\psi_{RA}) = \bigoplus_k \lambda_k \frac{1}{\lambda_k} \Tr_{2} \left[ (\iden_R\otimes P_k) \psi (\iden_R \otimes P_k) \right] \otimes \delta_k = \bigoplus_k \lambda_k \theta_k \otimes \delta_k, 
\end{equation}
where $\lambda_k = \Tr \left[ (\iden_R\otimes P_k) \psi (\iden_R \otimes P_k) \right]$ and each $\theta_k$ is a state in $\State{\Hil_R\otimes \Hil_{k,1}}$. Thus, by choosing $\sigma_A = \oplus_k \lambda_k \sigma_k \otimes \delta_k$, where $\sigma_k$ are arbitrary states in $\State{\Hil_{k,1}}$, we get 
\begin{align}
    \inf_{\sigma} D_{\max}(\mathcal{P}_{A\to A}(\psi_{RA}) || \iden_R \otimes \sigma_A ) &\leq \inf_{ \{ \sigma_k \}_k} 
    D_{\max}\left(\bigoplus_k \lambda_k \theta_k \otimes \delta_k \bigg\| \bigoplus_k \lambda_k \iden_R \otimes \sigma_k \otimes \delta_k \right) \nonumber \\ 
    &= \inf_{ \{\sigma_k\}_k } \max_k D_{\max} (\theta_k || \iden_R \otimes \sigma_k) \nonumber \\ 
    &= \max_k \inf_{\sigma_k} D_{\max} (\theta_k || \iden_R \otimes \sigma_k) \nonumber \\
    &\leq \log \max_k d_k,
\end{align}
where the first equality follows from quasi-convexity of $D_{\max}$ (Eq.~\eqref{eq:qconvex}), the second equality follows from Lemma~\ref{lemma:infmax}, and the last inequality follows from the fact that for any state $\rho_{AB}$,
\begin{equation}
    \inf_{\sigma\in \State{\Hil_B}} D_{\max}(\rho_{AB} \Vert \iden_A \otimes \sigma_B) \leq D_{\max}(\rho_{AB} \Vert \iden_A \otimes \iden_B/d_B) \leq \log d_B.
\end{equation}

\medskip

\noindent
{\bf{Converse: Classical communication~\eqref{eq:Cconverse}}}
\smallskip

For the classical capacity, Lemma~\ref{lemma:1shot-converse} shows that
\begin{align}
    C_{\epsilon + \kappa\mu^n}(\mathcal{P}) \leq \sup_{\rho_{MA}} \inf_{\sigma_A} D_{\max}(\mathcal{P}_{A\to A}(\rho_{MA}) || \rho_M \otimes \sigma_A ) + \log(\frac{1}{1-\epsilon - \kappa\mu^n}). 
\end{align}

For any classical-quantum (c-q) state $\rho_{MA}=\sum_m p(m)\ketbra{m}\otimes \rho^m_A$, we can write 
\begin{align}
    \inf_{\sigma}D_{\max}(\mathcal{P}_{A\to A}(\rho_{MA}) || \rho_M \otimes \sigma_A ) = \inf_{\sigma} \max_m  D_{\max}(\mathcal{P}_{A \to A}(\rho^m_A) || \sigma_A ) \leq \log (\sum_k d_k),
\end{align}
where the inequality follows by choosing $\sigma_A = (\oplus_k \iden_k \otimes \delta_k)/\sum_k d_k$ and noting that for any state $\rho$, its projection $\mathcal{P}(\rho)$ onto the peripheral space is dominated by $\oplus_k (\iden_k \otimes \delta_k )$ (see Eq.~\eqref{eq:phase-proj}).
\medskip
\medskip

\noindent
{\bf{Converse: Entanglement-assisted classical communication~\eqref{eq:Ceaconverse}}}
\smallskip

For the entanglement-assisted classical capacity, we again use Lemma~\ref{lemma:1shot-converse} to write
\begin{equation}
    C^{\operatorname{ea}}_{\epsilon+\kappa\mu^n}(\mathcal{P}) \leq \sup_{\psi_{RA}} \inf_{\sigma_A} D_{\max}(\mathcal{P}_{A\to A}(\psi_{RA}) || \psi_R \otimes \sigma_A) + \log(\frac{1}{1-\epsilon-\kappa\mu^n}), 
\end{equation}
where $d=d_A=d_R$. We note that the supremum in the above sum is achieved by the maximally entangled state $\psi^+_{RA}=1/d \sum_{i,j} \ketbra{i}{j} \otimes \ketbra{i}{j} =: \Gamma_{RA}/d$, where $\Gamma_{RA}$ is the unnormalized maximally entangled state (see \cite[Remark 2]{Fang2020smooth}). Now, by using Eq.~\eqref{eq:phase-proj}, we write
\begin{align}
     \mathcal{P}_{A\to A}(\psi^+_{RA}) = \frac{1}{d} \bigoplus_k \Tr_{2} \left[ (\iden_R\otimes P_k) \Gamma_{RA} (\iden_R \otimes P_k) \right] \otimes \delta_k = \frac{1}{d} \bigoplus_k \theta_k \otimes \delta_k, 
\end{align}
where $\theta_k = \Tr_{2} \left[ (\iden_R\otimes P_k) \Gamma_{RA} (\iden_R \otimes P_k) \right]$ is a positive operator in $\B{\Hil_R\otimes \Hil_{k,1}}$ with $\Tr \theta_k = d_k$. Let us choose $\sigma_A = \oplus_k \lambda_k (\iden_k /d_k \otimes \delta_k )$, where $\{ \lambda_k = d^2_k / \sum_{k'} d^2_{k'} \}_k$ is a probability distribution. Then,
\begin{align}
 \inf_{\sigma} D_{\max}(\mathcal{P}_{A\to A}(\psi^+_{RA}) || \psi_R^+ \otimes \sigma_A) &\leq D_{\max}\left(\frac{1}{d} \bigoplus_k \theta_k \otimes \delta_k \bigg\|  \frac{1}{d} \bigoplus_k \lambda_k \iden_R \otimes \frac{\iden_k}{d_k} \otimes \delta_k \right) \nonumber \\
  &= \max_k D_{\max} \left( \theta_k \bigg\| \lambda_k \iden_R \otimes \frac{\iden_k}{d_k} \right) \nonumber \\
  &= \log \left( \frac{d_k}{\lambda_k} \norm{\theta_k}_{\infty}  \right) \leq \log (d^2_k / \lambda_k ) = \log \left(\sum_k d^2_k\right),
\end{align}
which proves the desired result. Above, the first equality follows from quasi-convexity of $D_{\max}$ (Eq.~\eqref{eq:qconvex}) and the second equality follows from Eq.~\eqref{eq:Dmax}.
\medskip

\noindent
{\bf{Converse: Private classical communication~\eqref{eq:Pconverse}}}
\smallskip

The proof of the converse bound for private classical capacity requires a slightly different line of argumentation. We again fix $\epsilon\in [0,1)$ and note that $\norm{\Phi^n-\Phi^n_{\infty}}_{\diamond} \to 0$ as $n\to \infty$, where the convergence behaves like $\norm{\Phi^n-\Phi^n_{\infty}}_{\diamond}\leq \kappa\mu^n$, so that for $n$ large enough such that $\epsilon+\kappa\mu^n < 1$, we can use Lemmas~\ref{lemma:P1shot-converse}, \ref{lemma:Eepsilon-delta} and \ref{lemma:Ebottleneck}, and the fact that $\Phi^n_{\infty}=\Phi^n_{\infty} \circ \mathcal{P} = \mathcal{P}\circ \Phi^n_{\infty}$ for all $n$ to write 
\begin{align}
    C^{\operatorname{p}}_{\epsilon}(\Phi^n) \leq E_H^{\epsilon} (\Phi^n) &\leq E_H^{\epsilon + \kappa\mu^n} (\Phi^n_{\infty})  \nonumber\\ 
    &\leq E_H^{\epsilon + \kappa\mu^n} (\mathcal{P}) \nonumber \\
    &\leq \sup_{\psi_{RA}} \inf_{\sigma_{RA}\in \operatorname{SEP}(R:A)} D_{\max} (\mathcal{P}_{A\to A}(\psi_{RA}) \Vert \sigma_{RA}) + \log (\frac{1}{1-\epsilon-\kappa\mu^n}).
\end{align}
We bound the first term above as follows. For a pure state $\psi_{RA}$, we use Eq.~\eqref{eq:phase-proj} to write
\begin{equation}
    \mathcal{P}_{A\to A}(\psi_{RA}) = \bigoplus_k \lambda_k \frac{1}{\lambda_k} \Tr_{2} \left[ (\iden_R\otimes P_k) \psi (\iden_R \otimes P_k) \right] \otimes \delta_k = \bigoplus_k \lambda_k \theta_k \otimes \delta_k, 
\end{equation}
where $\lambda_k = \Tr \left[ (\iden_R\otimes P_k) \psi (\iden_R \otimes P_k) \right]$ and each $\theta_k$ is a state in $\State{\Hil_R\otimes \Hil_{k,1}}$. Thus, by choosing $\sigma_{RA} = \oplus_k \lambda_k \sigma_k \otimes \delta_k$, where $\sigma_k$ are arbitrary separable states in $\State{\Hil_R\otimes \Hil_{k,1}}$, we get 
\begin{align}
    \inf_{\sigma\in \operatorname{SEP}(R:A)} D_{\max}(\mathcal{P}_{A\to A}(\psi_{RA}) || \sigma_{RA} ) &\leq \inf_{ \{ \sigma_k \}_k } 
    D_{\max}\left(\bigoplus_k \lambda_k \theta_k \otimes \delta_k \bigg\| \bigoplus_k \lambda_k \sigma_k \otimes \delta_k \right) \nonumber \\ 
    &= \inf_{ \{\sigma_k\}_k } \max_k D_{\max} (\theta_k || \sigma_k) \nonumber \\ 
    &= \max_k \inf_{\sigma_k} D_{\max} (\theta_k || \sigma_k) \nonumber \\
    &\leq \log ( \max_k d_k),
\end{align}
where the first equality follows from Eq.~\eqref{eq:qconvex}, the second equality follows from Lemma~\ref{lemma:infmax}, and the last inequality follows from the fact that for any state $\rho_{AB}$ (see Lemma~\ref{lemma:Emax}),
\begin{equation}
    \inf_{\sigma\in \operatorname{SEP}(A:B)} D_{\max}(\rho_{AB}\Vert \sigma_{AB}) \leq \log \min \{d_A,d_B \}.
\end{equation}
$\qed$

\begin{remark}
    The achievability bounds in Eqs.~\eqref{qlo}-\eqref{cealo} of Theorem~\ref{theorem:main} are obtained by constructing communication protocols that work with $\epsilon=0$ error. It is unclear whether these bounds can be improved by explicitly taking $\epsilon$ into account. In this regard, we note that the achievability bounds in \cite{fawzi2024error} on the quantum and classical capacities do take $\epsilon$ into account and are slightly better than the ones in Eqs.~\eqref{qlo},\eqref{clo}. However, the error criteria they consider when defining the capacities are of the average kind, as opposed to the worst case error criteria that we employ. 
\end{remark}

\section{Conclusion}
Our work provides a starting point to understand the fundamental limits on error-correction in a dynamical framework. By eliminating all computational and physical constraints on the encoding and decoding operations, we have presented an information-theoretic analysis of the most general information storage capacities of a quantum memory device. Two critical assumptions of our model are worth highlighting:
\begin{itemize}
    \item \textbf{Markovian noise:} We assume that the memory experiences Markovian noise which can be modelled as a dQMS. The physical justification for this assumption is debatable. However, it does provide a good starting point for our study, and is the standard assumption in the study of quantum memories coupled to heat baths at finite temperatures (see \cite{Brown2016memory}). Note that in this regard, it is common to impose locality constraints on the interaction between the memory and bath, so that at each time step, the thermal interaction affects only a fixed number of physical qubits in memory. Our model, on the other hand, has no such restriction, and we allow the interaction to arbitrarily affect all qubits in the memory at each time step.
    \item \textbf{Passive error correction:} We assume a passive model for error-correction, i.e., we do not allow error-correction to occur in between time steps. However, note that our model can cover a fixed error-correction mechanism at each time step by setting $\Phi=\Phi_{\operatorname{ecc}}\circ \Phi_{\operatorname{noise}}$, where $\Phi_{\operatorname{ecc}}$ is the error-correction mechanism and $\Phi_{\operatorname{noise}}$ is the noise.
\end{itemize}

It would be interesting to see to what extent do the current results hold when the stated assumptions are relaxed. Furthermore, in the current model, one can ask for what noise models $\Phi$ do the capacities converge to their infinite-time values faster than the rate proposed in Section~\ref{subsec:convergence}.

\section{Acknowledgements}
We thank Omar Fawzi for helpful comments on the first version of this manuscript.

\appendix

\section{Technical results} \label{appen:tech}
\begin{lemma}\label{lemma:Q<P}
    For a quantum channel $\Phi:\B{\Hil_A}\to \B{\Hil_B}$, the zero-error one-shot quantum and private capacities satisfy the relation $Q_0(\Phi)\leq C^{\operatorname{p}}_0(\Phi)$.
\end{lemma}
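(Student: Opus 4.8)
The plan is to convert any zero-error quantum code for $\Phi$ into a zero-error private classical code carrying the same number of messages. Suppose $Q_0(\Phi)\geq \log d$, witnessed by a $(d,0)$ quantum communication protocol $(\mathcal{E}_{A'\to A},\mathcal{D}_{B\to B'})$. For $\epsilon=0$ the recovery condition holds for every pure $\psi_{RA'}$, including maximally entangled inputs, so it forces the exact identity $\mathcal{D}\circ\Phi\circ\mathcal{E}=\id_{A'\to B'}$. First I would bundle the encoding into the channel, setting $\mathcal{M}_{A'\to B}:=\Phi\circ\mathcal{E}$, which is then perfectly reversible, $\mathcal{D}\circ\mathcal{M}=\id$.

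The key step is a decoupling statement: perfect reversibility of $\mathcal{M}$ forces its complementary channel $\mathcal{M}^c_{A'\to F}$ (with $F$ the joint environment of $\Phi$ and of a dilation of $\mathcal{E}$) to be a \emph{constant} channel, $\mathcal{M}^c(\rho)=\xi_F$ for a fixed $\xi_F$ and all inputs $\rho$. I would establish this by feeding in the maximally entangled state $\psi^+_{RA'}$, dilating both $\mathcal{M}$ and $\mathcal{D}$ to isometries to obtain a global pure state on $RB'B''F$, and noting that perfect recovery makes the $RB'$ marginal equal to the pure maximally entangled state. Since a pure marginal factorizes out of a global pure state, the $RF$ marginal becomes a product $\tfrac{\iden_R}{d}\otimes\xi_F$; by the Choi--Jamio\l{}kowski correspondence, a normalized Choi state of this product form is exactly the statement that $\mathcal{M}^c$ is the constant channel onto $\xi_F$. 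Tracing out the encoder's environment then shows that $\Phi^c\circ\mathcal{E}$ is also constant, equal to $\omega_E:=\Tr_{E_0}\xi_F$, which is the privacy guarantee.

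With this in hand I would define the private code: encoding states $\rho^m_A:=\mathcal{E}(\ketbra{m})$ for $m\in[d]$ (with $\{\ket{m}\}$ the computational basis of $\Hil_{A'}$), and a decoding POVM obtained by reading out the quantum decoder in that basis, $\Lambda^m_B:=\mathcal{D}^*(\ketbra{m}_{B'})$, which is a valid POVM since $\mathcal{D}$ is trace preserving and hence $\mathcal{D}^*$ is unital. Perfect quantum recovery gives $\mathcal{D}(\Phi(\rho^m_A))=\ketbra{m}_{B'}$, so $\Tr[\Lambda^m_B\Phi(\rho^m_A)]=1$ and $\Tr[\Lambda^{m'}_B\Phi(\rho^m_A)]=0$ for $m'\neq m$. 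To verify the exact privacy condition I would compute the joint output on $M'E$: the wrong-outcome branches $m'\neq m$ contribute positive operators of zero trace and hence vanish, giving $\mathcal{D}_{B\to M'}\circ\mathcal{V}_{A\to BE}(\rho^m_A)=\ketbra{m}_{M'}\otimes\Tr_B[(\Lambda^m_B\otimes\iden_E)\mathcal{V}_{A\to BE}(\rho^m_A)]=\ketbra{m}_{M'}\otimes\Phi^c(\rho^m_A)=\ketbra{m}_{M'}\otimes\omega_E$, where the last equality uses constancy of $\Phi^c\circ\mathcal{E}$. Choosing $\sigma_E:=\omega_E$ makes the fidelity equal to $1$, so we obtain a $(d,0)$ private classical code and $C^{\operatorname{p}}_0(\Phi)\geq\log d$; taking the supremum over codes yields $Q_0(\Phi)\leq C^{\operatorname{p}}_0(\Phi)$.

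The hard part will be the decoupling step: rigorously arguing that perfect quantum recovery (including of entanglement with a reference) forces the environment to carry no information about the message, and then transferring this from the composite environment $F$ down to $\Phi$'s own environment $E$. The remaining bookkeeping — validity of the POVM, vanishing of the wrong-outcome branches, and assembling the exact product form on $M'E$ — is routine once the decoupling is in place.
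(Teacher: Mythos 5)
Your proposal is correct and follows essentially the same route as the paper: the private code is identical (encoding states $\mathcal{E}(\ketbra{m}_{A'})$, decoding POVM $\{\mathcal{D}^*(\ketbra{m}_{B'})\}_m$), and the privacy guarantee rests on the same decoupling fact — perfect recovery of the maximally entangled state forces the channel-output-plus-environment state to be $\psi^+_{RB'}\otimes\sigma_E$, because a pure marginal admits only product extensions. The only difference is presentational: you package the decoupling as constancy of the complementary channel $\Phi^c\circ\mathcal{E}$ via a Choi-state argument, whereas the paper dephases $R$ and $B'$ and matches terms message by message; the two are equivalent here.
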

\begin{proof}
    Consider a $(\mathscr{M}, 0)$ quantum communication protocol $(\mathcal{E}_{A'\to A}, \mathcal{D}_{B\to B'})$ for $\Phi$ with $\mathscr{M}=d_{A'}=d_{B'}=d_R$, which we can use to transmit one-half of a maximally state 
    \begin{equation}
        \psi^+_{RA'} = \frac{1}{\mathscr{M}} \sum_{1\leq m,m'\leq \mathscr{M}} \ketbra{m}{m'}_R \otimes \ketbra{m}{m'}_{A'}
    \end{equation}
    of Schmidt rank $\mathscr{M}$ through $\Phi$ perfectly, i.e.
    \begin{equation}
        \psi^+_{RB'} = \mathcal{D}_{B\to B'}\circ \Phi_{A\to B}\circ \mathcal{E}_{A'\to A}(\psi^+_{RA'}).
    \end{equation}
    Let $\mathcal{V}_{A\to BE}$ be an isometric extension of $\Phi_{A\to B}$ and consider the state 
    \begin{equation}
        \omega_{RB'E} = \mathcal{D}_{B\to B'}\circ \mathcal{V}_{A\to BE}\circ \mathcal{E}_{A'\to A}(\psi^+_{RA'}),
    \end{equation}
    which extends the state at the output of the protocol, i.e.,  $\omega_{RB'} = \mathcal{D}_{B\to B'}\circ \Phi_{A\to B}\circ \mathcal{E}_{A'\to A}(\psi^+_{RA'})$. Since the only possible extension of $\psi^+_{RB'}$ is of the form $\psi^+_{RB'}\otimes \sigma_E$ for some state $\sigma_E$, we get
    \begin{equation}
        \omega_{RB'E} = \psi^+_{RB'}\otimes \sigma_E =  \mathcal{D}_{B\to B'}\circ \mathcal{V}_{A\to BE}\circ \mathcal{E}_{A'\to A}(\psi^+_{RA'}).
    \end{equation}
    Applying a measurement with POVMs $\{ \ketbra{m}_R \}_{m\in [\mathscr{M}]}$ and $\{ \ketbra{m}_{B'} \}_{m\in [\mathscr{M}]}$ on the $R$ and $B'$ systems yields    \begin{equation}\label{eq:star}
        \overbar{\psi}_{RB'}^+ \otimes \sigma_E =  \overbar{\mathcal{D}}_{B\to B'}\circ \mathcal{V}_{A\to BE}\circ \mathcal{E}_{A'\to A}(\overbar{\psi}^+_{RA'}),
    \end{equation}
    where $\overbar{\psi}_{RA'}^+ = \frac{1}{\mathscr{M}} \sum_{1\leq m\leq \mathscr{M}} \ketbra{m}{m}_R \otimes \ketbra{m}{m}_{A'}$ is a maximally classically correlated state and $\overbar{\mathcal{D}}_{B\to B'}$ is a measurement channel defined as 
    \begin{align}
        \overbar{\mathcal{D}}_{B\to B'}(X_B) &= \sum_m \Tr (\ketbra{m}_{B'} \mathcal{D}_{B\to B'}(X_B) ) \ketbra{m}_{B'} \nonumber \\ 
        &= \sum_m \Tr ( \mathcal{D}_{B'\to B}^*( \ketbra{m}_{B'}) X_B ) \ketbra{m}_{B'}. 
    \end{align} 
    Note that $\{ \mathcal{D}_{B'\to B}^*( \ketbra{m}_{B'} \}_{m=1}^{\mathscr{M}}$ forms a POVM, since $\mathcal{D}_{B'\to B}^*$ is unital. Expanding the LHS and RHS of the Eq.~\eqref{eq:star} by using the formula for $\overbar{\psi}^+$ and matching terms shows that for each $m\in [\mathscr{M}]$,
    \begin{equation}
        \ketbra{m}_{B'}\otimes \sigma_E = \overbar{\mathcal{D}}_{B\to B'}\circ \mathcal{V}_{A\to BE} (\rho^m_{A}),
    \end{equation}
    where the states $\rho^m$ are defined as $\rho^m_A = \mathcal{E}_{A'\to A}(\ketbra{m}_{A'})$. Thus, the encoding states $\rho^m_A$ for $m\in [\mathscr{M}]$ and the decoding POVM $\{ \mathcal{D}_{B'\to B}^*( \ketbra{m}_{B'} \}_{m=1}^{\mathscr{M}}$ forms a $(\mathscr{M},0)$ private classical communication protocol for $\Phi$. Since the quantum communication protocol that we started with was arbitrary, we obtain the desired result.
\end{proof}

\begin{lemma}\label{lemma:1shot-converse}
    For any channel $\Phi:\B{\Hil_A}\to \B{\Hil_B}$ and $\epsilon\in [0,1)$, the following bounds hold:
    \begin{align}
        Q_{\epsilon}(\Phi) &\leq \sup_{\psi_{RA}} \inf_{\sigma_B} D_{\max}(\Phi_{A\to B}(\psi_{RA}) || \iden_R \otimes \sigma_B) + \log(\frac{1}{1-\epsilon}), \\
        C_{\epsilon}(\Phi) &\leq \sup_{\rho_{MA}} \inf_{\sigma_B} D_{\max}(\Phi_{A\to B}(\rho_{MA}) || \rho_M \otimes \sigma_B) + \log(\frac{1}{1-\epsilon}),  \\ 
        C^{\operatorname{ea}}_{\epsilon}(\Phi) &\leq \sup_{\psi_{RA}} \inf_{\sigma_B} D_{\max}(\Phi_{A\to B}(\psi_{RA}) || \psi_R \otimes \sigma_B) + \log(\frac{1}{1-\epsilon}), 
    \end{align}
    where the supremum is either over pure states $\psi_{RA}\in \State{\Hil_R\otimes \Hil_A}$ with $d_R=d_A$ or classical-quantum states $\rho_{MA}=\sum_m p(m) \ketbra{m}_M\otimes \rho^m_A\in \State{\Hil_M\otimes \Hil_A}$ , and the infimum is over arbitrary states $\sigma_B\in \State{\Hil_B}$.
\end{lemma}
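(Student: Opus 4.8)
The plan is to route every bound through the $\epsilon$-hypothesis testing relative entropy and then trade it for $D_{\max}$. The two ingredients are: (a) the pointwise estimate $D_H^{\epsilon}(\rho\Vert\sigma)\le D_{\max}(\rho\Vert\sigma)+\log\tfrac{1}{1-\epsilon}$, and (b) a ``meta-converse'' for each capacity that expresses $\log(\text{number of messages})$ as a hypothesis-testing quantity built from the decoder. For (a): if $\rho\le\lambda\sigma$ with $\lambda=2^{D_{\max}(\rho\Vert\sigma)}$, then any $0\le\Lambda\le\iden$ with $\Tr[\Lambda\rho]\ge 1-\epsilon$ satisfies $\Tr[\Lambda\sigma]\ge\lambda^{-1}\Tr[\Lambda\rho]\ge(1-\epsilon)/\lambda$, so $\beta_H^{\epsilon}(\rho\Vert\sigma)\ge(1-\epsilon)/\lambda$ and (a) follows. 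Since (a) holds for every $\sigma$, applying $\inf_{\sigma_B}$ and then the supremum over inputs shows it suffices to prove each stated bound with $D_H^{\epsilon}$ in place of $D_{\max}$ and without the $\log\tfrac{1}{1-\epsilon}$ term.

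For classical communication, given an $(\mathscr{M},\epsilon)$ protocol $\{\rho^m_A,\Lambda^m_B\}$ I would take the uniform prior, form the c-q state $\rho_{MA}=\tfrac{1}{\mathscr{M}}\sum_m\ketbra{m}_M\otimes\rho^m_A$, and use the test $\Lambda_{MB}=\sum_m\ketbra{m}_M\otimes\Lambda^m_B$. Then $\Tr[\Lambda_{MB}\Phi(\rho_{MA})]=\tfrac{1}{\mathscr{M}}\sum_m\Tr[\Lambda^m_B\Phi(\rho^m_A)]\ge 1-\epsilon$, while for any $\sigma_B$, using $\rho_M=\iden_M/\mathscr{M}$ and $\sum_m\Lambda^m_B\le\iden_B$ gives $\Tr[\Lambda_{MB}(\rho_M\otimes\sigma_B)]\le 1/\mathscr{M}$. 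Hence $\beta_H^{\epsilon}(\Phi(\rho_{MA})\Vert\rho_M\otimes\sigma_B)\le 1/\mathscr{M}$ for every $\sigma_B$, i.e.\ $\log\mathscr{M}\le\inf_{\sigma_B}D_H^{\epsilon}(\Phi(\rho_{MA})\Vert\rho_M\otimes\sigma_B)$. The entanglement-assisted case is handled identically after bundling the message register with Bob's half $B'$ of the shared state into one reference: with $R=B'$ and $\phi^m_{RA}=\mathcal{E}^m_{A'\to A}(\psi_{RA'})$ (all sharing the marginal $\phi_R=\psi_{B'}$), the state $\phi_{R'A}=\tfrac{1}{\mathscr{M}}\sum_m\ketbra{m}_M\otimes\phi^m_{RA}$ with $R'=MR$ has marginal $\phi_{R'}=\tfrac{\iden_M}{\mathscr{M}}\otimes\phi_R$, and the test $\sum_m\ketbra{m}_M\otimes\Lambda^m_{RB}$ again yields success $\ge 1-\epsilon$ and null value $\le 1/\mathscr{M}$ by POVM completeness, giving $\log\mathscr{M}\le\inf_{\sigma_B}D_H^{\epsilon}(\Phi(\phi_{R'A})\Vert\phi_{R'}\otimes\sigma_B)$.

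The quantum case uses the decoder's adjoint. I would feed the maximally entangled state $\psi^+_{RA'}$ (with $d_R=d_{A'}=d$) into the code, set $\rho_{RA}=\mathcal{E}_{A'\to A}(\psi^+_{RA'})$, and define $T_{RB}=(\id_R\otimes\mathcal{D}^*_{B'\to B})(\psi^+_{RB'})$; since $\mathcal{D}^*$ is completely positive and unital, $0\le T_{RB}\le\iden$. The recovery condition evaluated at the maximally entangled input gives $\Tr[T_{RB}\Phi(\rho_{RA})]=\bra{\psi^+}\mathcal{D}\circ\Phi\circ\mathcal{E}(\psi^+_{RA'})\ket{\psi^+}\ge 1-\epsilon$, while moving $\mathcal{D}^*$ back onto $\mathcal{D}$ shows $\Tr[T_{RB}(\iden_R\otimes\sigma_B)]=\Tr[\psi^+_{RB'}(\iden_R\otimes\mathcal{D}(\sigma_B))]=1/d$ for every $\sigma_B$. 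Thus $\log d\le\inf_{\sigma_B}D_H^{\epsilon}(\Phi(\rho_{RA})\Vert\iden_R\otimes\sigma_B)$.

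In all three cases I would then apply (a) at the achieved state and pass to the supremum over inputs. The one remaining point, which I expect to be the main obstacle, is the reduction to the stated optimization domains: the states produced above are c-q or mixed and carry an enlarged reference, whereas the statement restricts to pure $\psi_{RA}$ with $d_R=d_A$ (resp.\ c-q states). For the classical bound this is automatic. For the quantum and entanglement-assisted bounds I would invoke the standard fact that the $D_{\max}$-based channel quantities $\sup_{\rho}\inf_{\sigma}D_{\max}(\Phi(\rho_{RA})\Vert\iden_R\otimes\sigma_B)$ and $\sup_{\rho}\inf_{\sigma}D_{\max}(\Phi(\rho_{RA})\Vert\rho_R\otimes\sigma_B)$ are attained on pure inputs with $d_R=d_A$ (quasi-convexity of $D_{\max}$ in its first argument together with a purification/dimension argument). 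Care is essential here, since naively purifying the encoder enlarges the reference and, for the $\iden_R$ bound, inflates the null value by the purifying dimension; it is this optimality statement, rather than a direct purification of the code's own input, that closes the argument.
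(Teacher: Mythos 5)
Your proposal is correct, but it takes a genuinely different route from the paper's: the paper does not prove Lemma~\ref{lemma:1shot-converse} at all---it simply cites the one-shot meta-converses from the Khatri--Wilde book (Theorems 11.6, 12.4 and Corollary 14.4)---whereas you reconstruct them. All three of your constructions check out: the uniform-prior c-q state with block test $\sum_m \ketbra{m}_M\otimes\Lambda^m_B$ for $C_{\epsilon}$; the bundled reference $R'=MB'$ with POVM completeness for $C^{\operatorname{ea}}_{\epsilon}$; and the decoder-adjoint test $T_{RB}=(\id_R\otimes\mathcal{D}^{*}_{B'\to B})(\psi^{+}_{RB'})$ for $Q_{\epsilon}$, which is a valid test by unitality of $\mathcal{D}^{*}$ and has null value exactly $1/d$. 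Your conversion step is precisely the inequality $D_H^{\epsilon}(\rho\Vert\sigma)\leq D_{\max}(\rho\Vert\sigma)+\log(\frac{1}{1-\epsilon})$ of Cooney et al.\ that the paper itself invokes in Lemma~\ref{lemma:P1shot-converse} for the private case. The one step you defer to a ``standard fact''---reducing to pure inputs with $d_R=d_A$---is true but requires two distinct mechanisms, and your phrasing slightly conflates them: for the $\iden_R$ bound, decompose the mixed input into pure components and apply the paper's quasi-convexity \eqref{eq:qconvex} with matched decompositions (components $p_i\,\Phi(\psi^i_{RA})$ against $p_i\,\iden_R\otimes\sigma_i$) together with Lemma~\ref{lemma:infmax}, then compress the reference by a Schmidt-rank isometry, which leaves $D_{\max}(\,\cdot\,\Vert \iden_R\otimes\sigma_B)$ unchanged; for the entanglement-assisted bound, quasi-convexity in the first argument does \emph{not} suffice (the input marginal sits inside the second argument, so the $\sigma_i$ cannot be decoupled and one only gets an $\inf$-$\max$ rather than a $\max$-$\inf$), and the correct tool is purification plus data processing of $D_{\max}$ under a channel acting on the reference alone---the argument behind Remark 2 of Fang et al., which the paper already cites when it specializes this very bound to the maximally entangled state. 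What each approach buys: the paper's citation is brief and outsources exactly these reductions; your derivation is self-contained and makes visible that all of the paper's converses, including the private one, flow through the single inequality $D_H^{\epsilon}\leq D_{\max}+\log(\frac{1}{1-\epsilon})$.
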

\begin{proof}
    For a unified account of these bounds, we refer the readers to \cite[Chapters 11-14]{khatri2024principles}. In particular, we make use of
    \cite[Theorem 11.6, Theorem 12.4, Corollary 14.4]{khatri2024principles}. See also the Bibliographic notes in \cite{khatri2024principles} for references to original papers where these bounds were first established.
\end{proof}

\begin{lemma}\label{lemma:P1shot-converse}
    For any channel $\Phi:\B{\Hil_A}\to \B{\Hil_B}$ and $\epsilon\in [0,1)$, the following bound holds:
    \begin{equation}
        C^{\operatorname{p}}_{\epsilon} (\Phi) \leq E_H^{\epsilon}(\Phi) \leq \sup_{\psi_{RA}} \inf_{\sigma_{RB}\in \operatorname{SEP}(R:B)} D_{\max} (\Phi_{A\to B}(\psi_{RA}) \Vert \sigma_{RB}) + \log (\frac{1}{1-\epsilon}),
    \end{equation}
    where the supremum if over all pure states $\psi_{RA}$ with $d_R=d_A$.
\end{lemma}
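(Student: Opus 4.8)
The statement contains two inequalities, and I would treat them separately. The second inequality is the more routine one: it is an instance of the pointwise bound
\begin{equation}
D_H^{\epsilon}(\rho\Vert\sigma)\leq D_{\max}(\rho\Vert\sigma)+\log\frac{1}{1-\epsilon},
\end{equation}
valid for any state $\rho$ and positive semi-definite $\sigma$. To see this, write $\lambda=2^{D_{\max}(\rho\Vert\sigma)}$ so that $\rho\leq\lambda\sigma$; then any test $0\leq\Lambda\leq\iden$ with $\Tr\Lambda\rho\geq1-\epsilon$ satisfies $\Tr\Lambda\sigma\geq\lambda^{-1}\Tr\Lambda\rho\geq(1-\epsilon)/\lambda$, whence $\beta_H^{\epsilon}(\rho\Vert\sigma)\geq(1-\epsilon)/\lambda$ and the claim follows on taking $-\log$. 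Applying this with $\rho=\Phi_{A\to B}(\psi_{RA})$ and $\sigma=\sigma_{RB}\in\operatorname{SEP}(R:B)$, and noting that the additive constant $\log\frac{1}{1-\epsilon}$ is independent of both $\sigma_{RB}$ and $\psi_{RA}$, I would take the infimum over separable $\sigma_{RB}$ and then the supremum over pure $\psi_{RA}$ on both sides to obtain exactly the second inequality.

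The first inequality, $C_{\epsilon}^{\operatorname{p}}(\Phi)\leq E_H^{\epsilon}(\Phi)$, is the genuine converse and the substance of the lemma. I would argue that any $(\mathscr{M},\epsilon)$ private protocol certifies $\log\mathscr{M}\leq E_H^{\epsilon}(\Phi)$ by exhibiting, for a single well-chosen input, a hypothesis test $T_{RB}$ that the true channel output passes with probability at least $1-\epsilon$ while every separable state passes with probability at most $1/\mathscr{M}$; by definition of $\beta_H^{\epsilon}$ this yields $D_H^{\epsilon}(\Phi_{A\to B}(\psi_{RA})\Vert\sigma_{RB})\geq\log\mathscr{M}$ for every $\sigma_{RB}\in\operatorname{SEP}(R:B)$, hence $E_H^{\epsilon}(\Phi)\geq\log\mathscr{M}$. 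For the input I would purify each encoding $\rho^m_A$ and prepare the coherent superposition $\ket{\psi}_{RA}=\mathscr{M}^{-1/2}\sum_m\ket{m}_{R_1}\ket{\phi^m}_{R_2A}$ over a message register $R_1$, so that $R_1$ is maximally classically correlated with the encoded message. Propagating this through the Stinespring isometry $\mathcal{V}_{A\to BE}$ and then through Bob's decoding instrument produces a state on $R_1M'E$ which, by the privacy condition $F(\ketbra{m}_{M'}\otimes\sigma_E,\mathcal{D}_{B\to M'}\circ\mathcal{V}_{A\to BE}(\rho^m_A))\geq1-\epsilon$, is $\epsilon$-close to an ideal private state: Alice's register $R_1$ and Bob's decoded register $M'$ are maximally correlated while the eavesdropper $E$ is in a fixed product state $\sigma_E$.

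The crux, and the step I expect to be the main obstacle, is turning this approximate private state into the required test on $R:B$. The natural candidate is the privacy test of the private-state (pbit) formalism: a projector obtained by untwisting the decoded register against its shield and projecting onto the maximally entangled state, then pulled back through the local decoding channel $\mathcal{D}$ (via its unital adjoint) so that it acts on the channel output $B$ together with the reference $R$. Since a local channel on $B$ preserves separability across $R:B$, it suffices to bound the overlap of the privacy test with $R:M'$-separable states. The correctness half, $\Tr[T_{RB}\,\Phi_{A\to B}(\psi_{RA})]\geq1-\epsilon$, follows from decodability $\Tr[\Lambda^m_B\Phi_{A\to B}(\rho^m_A)]\geq1-\epsilon$ together with the closeness above. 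The decisive property is that every separable state passes the privacy test with probability at most $1/\mathscr{M}$, and this is precisely where the decoupling of $E$ (the security half of the privacy condition) enters: it is the shield-controlled twisting forced by privacy that renders the target indistinguishable from a genuine maximally entangled state of Schmidt rank $\mathscr{M}$, whose maximal fidelity with any separable state is $1/\mathscr{M}$.

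Establishing this separable overlap bound rigorously requires the full private-state machinery and its one-shot channel version, so in the written proof I would either import that apparatus or, more economically and in keeping with the treatment of the other capacities via Lemma~\ref{lemma:1shot-converse}, cite the one-shot private meta-converse $C_{\epsilon}^{\operatorname{p}}(\Phi)\leq E_H^{\epsilon}(\Phi)$ directly, and reserve the detailed argument for the clean $D_H^{\epsilon}$-to-$D_{\max}$ reduction that supplies the second inequality.
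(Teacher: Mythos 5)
Your proposal is correct and takes essentially the same route as the paper: the paper also obtains the first inequality by citing the one-shot private meta-converse of \cite{Wilde2017private} (whose proof is exactly the private-state/privacy-test argument you sketch), and the second inequality from the bound $D_H^{\epsilon}(\rho\Vert\sigma)\leq D_{\max}(\rho\Vert\sigma)+\log(\frac{1}{1-\epsilon})$, which it cites to \cite{Cooney2016hypo}. The only difference is cosmetic: you prove that last bound inline (your short argument via $\rho\leq\lambda\sigma$ is correct), whereas the paper cites it.
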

\begin{proof}
    The first inequality was proved in \cite[Theorem 11]{Wilde2017private}. The second inequality follows from the inequality
    \begin{equation}
        D_H^{\epsilon}(\rho \Vert \sigma) \leq D_{\max} (\rho \Vert \sigma) + \log (\frac{1}{1-\epsilon}),
    \end{equation}
    which holds for any two states $\rho,\sigma \in \State{\Hil_A}$ and was established in \cite[Lemma 5]{Cooney2016hypo}. 
\end{proof}

\begin{lemma}\label{lemma:epsilon-delta}
    Let $\Phi, \Psi : \B{\Hil_A}\to \B{\Hil_B}$ be quantum channels such that $\norm{\Phi-\Psi}_{\diamond}\leq \delta$. Then, for $\epsilon\in [0,1)$ such that $\epsilon+\delta<1$:
    \begin{align}
        Q_{\epsilon}(\Phi) &\leq Q_{\epsilon+\delta}(\Psi) \\
        C_{\epsilon}(\Phi) &\leq C_{\epsilon+\delta}(\Psi) \\
        C^{\operatorname{ea}}_{\epsilon}(\Phi) &\leq C^{\operatorname{ea}}_{\epsilon+\delta}(\Psi)
    \end{align}
\end{lemma}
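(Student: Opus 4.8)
The plan is to show that any communication protocol achieving error $\bar\epsilon \le \epsilon$ on $\Phi$ is automatically a protocol for $\Psi$ with error at most $\bar\epsilon + \delta \le \epsilon + \delta$; since every achievable code size for $\Phi$ at error $\epsilon$ is then achievable for $\Psi$ at error $\epsilon + \delta$, taking the supremum over protocols yields all three inequalities at once. The three cases share a single engine, namely the elementary bound
\begin{equation}
\norm{(\Phi-\Psi)_{A\to B}(X_{RA})}_1 \le \norm{\Phi-\Psi}_{\diamond}\, \norm{X_{RA}}_1 \le \delta \norm{X_{RA}}_1,
\end{equation}
valid for an \emph{arbitrary} reference system $R$. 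This deserves some care, since the definition in the text fixes $d_R = d_A$; I would invoke the standard stability property of the diamond norm, which guarantees that the supremum is unchanged when $R$ is allowed any finite dimension.

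For the classical capacity, I would fix an $(\mathscr{M},\bar\epsilon)$ protocol with encoding states $\rho^m_A$ and decoding POVM $\{\Lambda^m_B\}$, so that $\Tr[\Lambda^m_B \Phi(\rho^m_A)] \ge 1 - \bar\epsilon$ for each $m$. Since $0 \le \Lambda^m_B \le \iden$ forces $\norm{\Lambda^m_B}_\infty \le 1$, I estimate
\begin{equation}
\Tr[\Lambda^m_B \Psi(\rho^m_A)] \ge \Tr[\Lambda^m_B \Phi(\rho^m_A)] - \norm{(\Psi-\Phi)(\rho^m_A)}_1 \ge 1 - \bar\epsilon - \delta,
\end{equation}
which is exactly the defining condition of an $(\mathscr{M}, \bar\epsilon + \delta)$ protocol for $\Psi$. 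The entanglement-assisted case is word-for-word identical after replacing $\rho^m_A$ by the state $\mathcal{E}^m_{A'\to A}(\psi_{A'B'})$ supported on $AB'$ and $\Lambda^m_B$ by $\Lambda^m_{BB'}$; here the spectator system is $B'$, whose dimension is unconstrained, so this is precisely where the stability of the diamond norm over arbitrary ancilla dimensions is used.

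For the quantum capacity, I would fix a $(d,\bar\epsilon)$ protocol $(\mathcal{E}_{A'\to A}, \mathcal{D}_{B\to B'})$ and an arbitrary pure input $\psi_{RA'}$. Setting $\Xi := \mathcal{D}\circ(\Psi-\Phi)\circ\mathcal{E}$, the difference of the two entanglement fidelities is controlled by
\begin{equation}
\big| \bra{\psi_{RB'}} \Xi(\psi_{RA'}) \ket{\psi_{RB'}} \big| \le \norm{\Xi(\psi_{RA'})}_1 \le \norm{(\Psi-\Phi)_{A\to B}(\mathcal{E}(\psi_{RA'}))}_1 \le \delta,
\end{equation}
where the first step uses $|\bra{\phi}M\ket{\phi}|\le\norm{M}_\infty\le\norm{M}_1$ for a unit vector, the second uses contractivity of the trace norm under the channel $\mathcal{D}$ (more precisely under $\id_R\otimes\mathcal{D}$), and the third applies the diamond-norm bound to the normalized state $\mathcal{E}_{A'\to A}(\psi_{RA'})$ on $RA$. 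This gives $\bra{\psi_{RB'}}\mathcal{D}\circ\Psi\circ\mathcal{E}(\psi_{RA'})\ket{\psi_{RB'}}\ge 1-\bar\epsilon-\delta$ for every pure input, hence a $(d,\bar\epsilon+\delta)$ quantum protocol for $\Psi$.

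The argument is essentially routine. The only genuine points requiring attention are the diamond-norm stability invoked to treat reference and ancilla systems of dimension larger than $d_A$ (the $B'$ register in the entanglement-assisted case and the contraction step in the quantum case), and keeping the direction of every inequality consistent so that a lower bound on the success figure of merit for $\Phi$ descends correctly to a lower bound for $\Psi$.
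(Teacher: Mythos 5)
Your proof is correct and follows essentially the same route as the paper: a protocol for $\Phi$ is reinterpreted verbatim as a protocol for $\Psi$ with error inflated by $\delta$, using the H\"older-type bound $\Tr[\Lambda\, (\Phi-\Psi)(\rho)] \leq \norm{\Lambda}_{\infty}\norm{(\Phi-\Psi)(\rho)}_1 \leq \norm{\Phi-\Psi}_{\diamond}$. The only cosmetic differences are that in the quantum case the paper bounds the error term via sub-multiplicativity of the diamond norm, $\norm{\mathcal{D}\circ(\Phi-\Psi)\circ\mathcal{E}}_{\diamond} \leq \norm{\Phi-\Psi}_{\diamond}$ (which sidesteps the reference-dimension issue you handle via diamond-norm stability), and that you spell out the entanglement-assisted case and the arbitrary-dimension spectator system $B'$, which the paper dismisses as ``works similarly.''
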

\begin{proof}
Consider a $(M,\epsilon)$ classical communication protocol $\{\rho^m_A, \Lambda^m_B \}_{m=1}^M$ for $\Phi_{A\to B}$ as in Section~\ref{subsec:cc}. This means that for each message $m$,
\begin{equation}
    \Tr [\Lambda^m_{B}  (\Phi_{A\to B}(\rho^m_A)  ] \geq 1 -\epsilon.
\end{equation}
For each $m$, it is then easy to see that
\begin{align}
    \Tr [\Lambda^m_B  (\Psi_{A\to B}(\rho^m_A)  ] &= \Tr [\Lambda^m_B  (\Phi_{A\to B}(\rho^m)  ] - \Tr [\Lambda^m_B  (\Phi - \Psi)_{A\to B}(\rho^m_A)  ]  \nonumber \\
    &\geq 1- (\epsilon + \delta),
\end{align}
where the last inequality follows from the fact that 
\begin{align}
    \Tr [\Lambda^m_B  (\Phi - \Psi)_{A\to B}(\rho^m_A)  ] &\leq \norm{\Lambda^m_B}_{\infty} \norm{(\Phi-\Psi)_{A\to B}(\rho^m_A)}_1 \nonumber \\
    &\leq \norm{\Phi-\Psi}_{\diamond} \leq \delta.
\end{align}
Hence, $\{\rho^m_A, \Lambda^m_B \}_{m=1}^M$ works as a $(M,\epsilon+\delta)$ classical communication protocol for $\Psi_{A\to B}$.

Consider a $(d,\epsilon)$ quantum communication protocol $(\mathcal{E}_{A'\to A}, \mathcal{D}_{B\to B'})$ for $\Phi_{A\to B}$ such that 
\begin{equation}
\forall \psi_{RB'}: \quad  \Tr[\psi_{RB'} (\mathcal{D}_{B\to B'}\circ \Phi_{A\to B}\circ \mathcal{E}_{A'\to A}(\psi_{RA'})) ] \geq  1-\epsilon,
\end{equation}
where $d=d_{A'}=d_{B'}=d_R$. For any $\psi_{RA'}$, it is then easy to see that
\begin{align}
    &\Tr[\psi_{RB'} (\mathcal{D}_{B\to B'}\circ \Psi_{A\to B}\circ \mathcal{E}_{A'\to A}(\psi_{RA'}) )] \nonumber  \\ 
    &= \Tr[\psi_{RB'} (\mathcal{D}_{B\to B'}\circ \Phi_{A\to B}\circ \mathcal{E}_{A'\to A}(\psi_{RA'})) ] - \Tr[\psi_{RB'} (\mathcal{D}_{B\to B'}\circ (\Phi-\Psi)_{A\to B}\circ \mathcal{E}_{A'\to A}(\psi_{RA'})) ] \nonumber \\
    &\geq 1-(\epsilon + \delta),
\end{align}
where the last inequality follows from the fact that
\begin{align}
    \Tr[\psi_{RB'} (\mathcal{D}_{B\to B'}\circ (\Phi-\Psi)_{A\to B}\circ \mathcal{E}_{A'\to A}(\psi_{RA'})) ] &\leq \norm{\mathcal{D}_{B\to B'}\circ (\Phi-\Psi)_{A\to B}\circ \mathcal{E}_{A'\to A}}_{\diamond} \nonumber \\
    &\leq \norm{\Phi-\Psi}_{\diamond} \leq \delta,
\end{align}
where we have used sub-multiplicativity of the diamond norm \cite[Proposition 3.48]{watrous2018theory} and the fact that $\norm{\Phi}_{\diamond}=1$ for any channel $\Phi$ \cite[Proposition 3.44]{watrous2018theory}. Thus, $(\mathcal{E}_{A'\to A}, \mathcal{D}_{B\to B'})$ works as a $(d,\epsilon+\delta)$ quantum communication protocol for $\Psi_{A\to B}$. The proof for the entanglement-assisted classical capacity works similarly.
\end{proof}

\begin{lemma}\label{lemma:EHstate}
    Let $\rho_{AB}, \omega_{AB}$ be states such that $\norm{\rho-\omega}_1 \leq \delta$. Then, $E_{H}^{\epsilon}(A:B)_{\rho} \leq E_{H}^{\epsilon+\delta}(A:B)_{\omega}$.
\end{lemma}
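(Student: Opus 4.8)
The plan is to reduce the statement about the hypothesis testing relative entropy of entanglement $E_H^\epsilon$ to a pointwise comparison of the underlying hypothesis testing quantities $D_H^\epsilon$ against a \emph{fixed} separable state, and then take infima. Since both $E_{H}^{\epsilon}(A:B)_{\rho}$ and $E_{H}^{\epsilon+\delta}(A:B)_{\omega}$ are defined as infima over the \emph{same} set $\operatorname{SEP}(A:B)$ (see Eq.~\eqref{eq:EHstate}), it suffices to establish that for every fixed $\sigma_{AB}\in\operatorname{SEP}(A:B)$,
\[
D_H^{\epsilon}(\rho_{AB}\Vert\sigma_{AB}) \leq D_H^{\epsilon+\delta}(\omega_{AB}\Vert\sigma_{AB});
\]
taking the infimum over $\sigma_{AB}$ on both sides then yields the claimed inequality, because if $f(\sigma)\leq g(\sigma)$ pointwise then $\inf_\sigma f \leq \inf_\sigma g$.

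Because $D_H^{\epsilon} = -\log\beta_H^{\epsilon}$ and $-\log$ is monotonically decreasing, the displayed inequality is equivalent to $\beta_H^{\epsilon+\delta}(\omega\Vert\sigma)\leq \beta_H^{\epsilon}(\rho\Vert\sigma)$. First I would fix an arbitrary operator $\Lambda$ that is feasible for $\beta_H^{\epsilon}(\rho\Vert\sigma)$, i.e.\ $0\leq\Lambda\leq\iden$ with $\Tr(\Lambda\rho)\geq 1-\epsilon$, and show that the \emph{same} $\Lambda$ is feasible for $\beta_H^{\epsilon+\delta}(\omega\Vert\sigma)$. The key estimate is
\[
\Tr(\Lambda\omega) = \Tr(\Lambda\rho) - \Tr(\Lambda(\rho-\omega)) \geq (1-\epsilon) - \abs{\Tr(\Lambda(\rho-\omega))} \geq (1-\epsilon) - \delta,
\]
where the last step uses H\"older's inequality $\abs{\Tr(\Lambda(\rho-\omega))}\leq \norm{\Lambda}_{\infty}\norm{\rho-\omega}_1$ together with $\norm{\Lambda}_{\infty}\leq 1$ (from $\Lambda\leq\iden$) and the hypothesis $\norm{\rho-\omega}_1\leq\delta$. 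Hence $\Tr(\Lambda\omega)\geq 1-(\epsilon+\delta)$, so $\Lambda$ satisfies the constraint appearing in the definition of $\beta_H^{\epsilon+\delta}(\omega\Vert\sigma)$.

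Consequently the feasible set defining $\beta_H^{\epsilon}(\rho\Vert\sigma)$ is contained in the feasible set defining $\beta_H^{\epsilon+\delta}(\omega\Vert\sigma)$; since both infima minimise the identical objective $\Tr(\Lambda\sigma)$ over their respective feasible sets, enlarging the feasible set can only decrease the infimum, giving $\beta_H^{\epsilon+\delta}(\omega\Vert\sigma)\leq\beta_H^{\epsilon}(\rho\Vert\sigma)$. Undoing the logarithm and taking the infimum over separable $\sigma_{AB}$ completes the argument. There is no genuine obstacle here; the only points requiring care are the directions of the inequalities — the decreasing $-\log$, and the principle that a larger feasible set yields a smaller minimum — and the observation that the separable-state constraint set is identical on both sides, so the final infimum passes through cleanly without any matching of optimisers.
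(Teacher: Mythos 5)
Your proposal is correct and follows essentially the same route as the paper's proof: fix a separable $\sigma_{AB}$, show that any test $\Lambda$ feasible for $\beta_H^{\epsilon}(\rho\Vert\sigma)$ remains feasible for $\beta_H^{\epsilon+\delta}(\omega\Vert\sigma)$ via $\Tr \Lambda(\rho-\omega) \leq \norm{\Lambda}_{\infty}\norm{\rho-\omega}_1 \leq \delta$, conclude $D_H^{\epsilon}(\rho\Vert\sigma) \leq D_H^{\epsilon+\delta}(\omega\Vert\sigma)$, and take the infimum over separable states. The only difference is that you spell out the H\"older step and the feasible-set-containment reasoning that the paper leaves implicit.
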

\begin{proof}
    Fix a separable state $\sigma_{AB}$. Then, for every $\Lambda\in \B{\Hil_A\otimes \Hil_B}$ satisfying $0\leq \Lambda \leq \iden$ and $\Tr \Lambda \rho \geq 1-\epsilon$, we have $\Tr \Lambda \omega = \Tr \Lambda \rho - \Tr \Lambda (\rho - \omega) \geq 1 - (\epsilon + \delta) $. Hence, $\beta^{\epsilon}_H (\rho \Vert \sigma) \geq \beta^{\epsilon+\delta}_H (\omega \Vert \sigma)$ and $D^{\epsilon}_H (\rho \Vert \sigma) \leq D^{\epsilon+\delta}_H (\omega \Vert \sigma)$. The claim follows by taking an infimum over all separable states $\sigma_{AB}$.
\end{proof}

\begin{lemma}\label{lemma:Eepsilon-delta}
    Let $\Phi, \Psi : \B{\Hil_A}\to \B{\Hil_B}$ be quantum channels such that $\norm{\Phi-\Psi}_{\diamond}\leq \delta$. Then,
    \begin{align}
        E_H^{\epsilon}(\Phi) \leq E_H^{\epsilon+\delta}(\Psi)
    \end{align}
\end{lemma}
\begin{proof}
    For any pure state $\psi_{RA}$, $\norm{\Phi_{A\to B}(\psi_{RA}) - \Psi_{A\to B}(\psi_{RA}) }_1 \leq \delta$ because $\norm{\Phi-\Psi}_{\diamond}\leq \delta$. Thus, the claim follows from the definition (Eq.~\eqref{eq:EHchannel}) and Lemmma~\ref{lemma:EHstate}.
\end{proof}

\begin{lemma}\label{lemma:bottleneck}
Let $\Psi_{A\to B}$, $\Phi_{B\to C}$ be quantum channels. Then, for any $\epsilon\in [0,1)$:
    \begin{align}
        Q_{\epsilon}(\Phi\circ \Psi) &\leq \min\{Q_{\epsilon}(\Phi), Q_{\epsilon}(\Psi) \}, \\
        C_{\epsilon}(\Phi\circ \Psi) &\leq \min\{C_{\epsilon}(\Phi), C_{\epsilon}(\Psi) \},  \\
        C^{\operatorname{ea}}_{\epsilon}(\Phi\circ \Psi) &\leq \min\{C^{\operatorname{ea}}_{\epsilon}(\Phi), C^{\operatorname{ea}}_{\epsilon}(\Psi) \}.  
    \end{align}
\end{lemma}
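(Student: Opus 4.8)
The plan is to prove the "bottleneck" inequalities in Lemma~\ref{lemma:bottleneck} by the standard data-processing argument: any communication protocol for the composite channel $\Phi \circ \Psi$ can be reinterpreted as a protocol for either of the two constituent channels alone, by absorbing the other channel into the encoder or the decoder. Since capacities are defined as suprema over all valid protocols, showing that a $(\mathscr{M},\epsilon)$ protocol for $\Phi\circ\Psi$ yields a $(\mathscr{M},\epsilon)$ protocol for each of $\Phi$ and $\Psi$ (with the \emph{same} error $\epsilon$ and the \emph{same} number of messages $\mathscr{M}$) immediately gives the two bounds $Q_\epsilon(\Phi\circ\Psi)\le Q_\epsilon(\Phi)$ and $Q_\epsilon(\Phi\circ\Psi)\le Q_\epsilon(\Psi)$, and hence the minimum; and likewise for $C_\epsilon$ and $C_\epsilon^{\operatorname{ea}}$.

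Concretely, I would treat the six inequalities in three pairs, one pair per capacity. Consider first the classical case. Given a $(\mathscr{M},\epsilon)$ classical communication protocol $\{\rho^m_A,\Lambda^m_C\}_{m=1}^{\mathscr{M}}$ for $\Phi\circ\Psi$, so that $\Tr[\Lambda^m_C\,(\Phi\circ\Psi)_{A\to C}(\rho^m_A)]\ge 1-\epsilon$, I would (i) bound $C_\epsilon(\Phi\circ\Psi)\le C_\epsilon(\Psi)$ by keeping the same encoding states $\rho^m_A$ but replacing the decoding POVM by its pulled-back version $\{(\Phi^*_{B\to C})(\Lambda^m_C)\}_m$, which is a valid POVM on $B$ because the adjoint $\Phi^*$ of a channel is unital and positive; this is a protocol for $\Psi_{A\to B}$ achieving the identical success probability via $\Tr[\Phi^*(\Lambda^m)\Psi(\rho^m)] = \Tr[\Lambda^m\,\Phi(\Psi(\rho^m))]$. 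For (ii) $C_\epsilon(\Phi\circ\Psi)\le C_\epsilon(\Phi)$, I would instead keep the decoding POVM $\{\Lambda^m_C\}$ and use the new encoding states $\Psi_{A\to B}(\rho^m_A)$, giving a protocol for $\Phi_{B\to C}$ with exactly the same error. For the quantum and entanglement-assisted cases the same idea applies at the level of encoding/decoding channels: for $Q_\epsilon(\Phi\circ\Psi)\le Q_\epsilon(\Psi)$ absorb $\Phi$ into the decoder by replacing $\mathcal{D}_{C\to B'}$ with $\mathcal{D}_{C\to B'}\circ\Phi_{B\to C}$, and for $Q_\epsilon(\Phi\circ\Psi)\le Q_\epsilon(\Phi)$ absorb $\Psi$ into the encoder by replacing $\mathcal{E}_{A'\to A}$ with $\Psi_{B\to C}\circ\mathcal{E}$ (relabelling the intermediate system appropriately); in each case the fidelity/success criterion is literally unchanged because $\mathcal{D}\circ(\Phi\circ\Psi)\circ\mathcal{E}$ is simply regrouped. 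The entanglement-assisted case is handled identically, absorbing one channel into the collection of encoding maps $\{\mathcal{E}^m\}$ or into the decoding POVM.

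The key observations that make each reinterpretation legitimate are: the composition of a channel with an encoding (resp.\ decoding) channel is again a channel, so the modified protocol is of the admissible form; and, in the classical case, the pulled-back operators form a genuine POVM since $\Phi^*$ is completely positive and unital (it preserves $0\le\Lambda\le\iden$ and $\sum_m\Lambda^m=\iden$). Because the number of messages $\mathscr{M}$ and the error level $\epsilon$ are preserved exactly under each transformation, taking the supremum of $\log\mathscr{M}$ (or $\log d$) over all protocols yields the stated inequalities with $\min$ on the right-hand side.

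I do not expect any genuine obstacle here: this is a routine monotonicity-under-composition argument, and the only point requiring a moment's care is the direction in which each channel is absorbed (into the encoder versus the decoder) and, in the classical case, verifying that the adjoint produces a valid POVM. Since the statement allows the same $\epsilon$ on both sides (unlike Lemma~\ref{lemma:epsilon-delta}, which pays a $\delta$ penalty), no error accounting is needed and the bounds hold cleanly for every $\epsilon\in[0,1)$.
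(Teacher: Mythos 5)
Your proposal is correct and follows essentially the same route as the paper: a $(\mathscr{M},\epsilon)$ protocol for $\Phi\circ\Psi$ is regrouped into a protocol for $\Phi$ or $\Psi$ alone by absorbing the other channel into the encoder or decoder, with no change to $\mathscr{M}$ or $\epsilon$; the paper spells this out only for the quantum capacity and leaves the rest to the reader, whereas you also make the classical case explicit via the unital adjoint $\Phi^*$ pulling back the POVM (and your $\Psi_{B\to C}\circ\mathcal{E}$ should read $\Psi_{A\to B}\circ\mathcal{E}$, a harmless relabelling slip you already flag).
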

\begin{proof}
    Consider a $(d,\epsilon)$ quantum communication protocol $(\mathcal{E}_{A'\to A}, \mathcal{D}_{C\to C'})$ for $\Phi\circ \Psi$, with $d=d_{A'}=d_{C'}$, such that for any pure state $\psi_{RA'}\in \State{\Hil_R\otimes \Hil_{A'}}$
    \begin{equation}
   \bra{\psi_{RC'}} \mathcal{D}_{C\to C'}\circ (\Phi\circ\Psi)_{A\to C}\circ \mathcal{E}_{A'\to A}(\psi_{RA'})\ket{\psi_{RC'}} \geq  1-\epsilon.
\end{equation}
Now, by absorbing either $\Psi$ into the encoding channel $\mathcal{E}_{A'\to A}$ or $\Phi$ into the decoding channel $\mathcal{D}_{C\to C'}$, we see that the same $(d,\epsilon)$ protocol works for $\Phi$ and $\Psi$, which proves the desired result. We leave similar proofs for the other capacities to the reader.
\end{proof}

\begin{lemma}\label{lemma:Ebottleneck}
    Let $\Psi_{A\to B}, \Phi_{B\to C}$ be quantum channels. Then, for any $\epsilon\in [0,1)$:
    \begin{equation}
           E_H^{\epsilon}(\Phi\circ \Psi) \leq \min\{ E_H^{\epsilon}(\Phi), E_H^{\epsilon}(\Psi) \}.
    \end{equation}
\end{lemma}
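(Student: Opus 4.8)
The plan is to prove the two inequalities $E_H^\epsilon(\Phi\circ\Psi)\leq E_H^\epsilon(\Phi)$ and $E_H^\epsilon(\Phi\circ\Psi)\leq E_H^\epsilon(\Psi)$ separately, both unfolding the channel definition in Eq.~\eqref{eq:EHchannel}. Throughout, I fix an input state $\psi_{RA}$ and set $\tau_{RB}:=\Psi_{A\to B}(\psi_{RA})$ and $\omega_{RC}:=\Phi_{B\to C}(\tau_{RB})=(\Phi\circ\Psi)_{A\to C}(\psi_{RA})$.

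For the first bound, the key observation is that $\omega_{RC}$ is obtained by feeding the (generally mixed) intermediate state $\tau_{RB}$ into $\Phi_{B\to C}$. Since the supremum in Eq.~\eqref{eq:EHchannel} ranges over \emph{all} input states (not only pure ones), $\tau_{RB}$ is an admissible input for $\Phi$ and we immediately get $E_H^\epsilon(R:C)_{\omega}=E_H^\epsilon(R:C)_{\Phi(\tau_{RB})}\leq \sup_{\rho_{RB}}E_H^\epsilon(R:C)_{\Phi(\rho_{RB})}=E_H^\epsilon(\Phi)$. Taking the supremum over $\psi_{RA}$ yields the claim; this part is essentially immediate.

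For the second bound I would establish a data-processing inequality for the relative entropy of entanglement under the local channel $\Phi_{B\to C}$. The two ingredients are: (i) if $\sigma_{RB}\in\operatorname{SEP}(R:B)$ then $\Phi_{B\to C}(\sigma_{RB})\in\operatorname{SEP}(R:C)$, since $\Phi$ acts only on the $B$ factor and hence preserves separability; and (ii) the hypothesis-testing relative entropy obeys $D_H^\epsilon(\Phi(\tau)\Vert\Phi(\sigma))\leq D_H^\epsilon(\tau\Vert\sigma)$. Restricting the infimum defining $E_H^\epsilon(R:C)_{\omega}$ to separable states of the special form $\Phi(\sigma_{RB})$ and then applying data processing gives
\begin{align*}
E_H^\epsilon(R:C)_{\omega} &= \inf_{\sigma'_{RC}\in\operatorname{SEP}(R:C)} D_H^\epsilon(\omega_{RC}\Vert\sigma'_{RC}) \\
&\leq \inf_{\sigma_{RB}\in\operatorname{SEP}(R:B)} D_H^\epsilon(\Phi(\tau_{RB})\Vert\Phi(\sigma_{RB})) \\
&\leq \inf_{\sigma_{RB}\in\operatorname{SEP}(R:B)} D_H^\epsilon(\tau_{RB}\Vert\sigma_{RB}) = E_H^\epsilon(R:B)_{\tau}.
\end{align*}
Finally, since $\tau_{RB}=\Psi(\psi_{RA})$ is a valid input state for $\Psi$, one has $E_H^\epsilon(R:B)_{\tau}\leq \sup_{\rho_{RA}}E_H^\epsilon(R:B)_{\Psi(\rho_{RA})}=E_H^\epsilon(\Psi)$, and taking the supremum over $\psi_{RA}$ finishes the argument.

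The only nontrivial step is the data-processing inequality in the second bound, and both of its ingredients are standard: preservation of separability by a local channel is immediate from the definition of $\operatorname{SEP}$, while monotonicity of $D_H^\epsilon$ under channels follows from the monotonicity of $\beta_H^\epsilon$ (a feasible test $\Lambda$ on $C$ pulls back to the feasible test $\Phi^*(\Lambda)$ on $B$). The one bookkeeping point to check is that the reference system $R$ (here of dimension $d_A$) is admissible in the definitions of $E_H^\epsilon(\Phi)$ and $E_H^\epsilon(\Psi)$, which it is because those suprema allow an arbitrary reference, so no pure-state/dimension restriction is needed.
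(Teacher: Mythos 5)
Your proposal is correct and follows essentially the same route as the paper: the bound $E_H^{\epsilon}(\Phi\circ\Psi)\leq E_H^{\epsilon}(\Phi)$ is obtained by viewing the intermediate state $\Psi_{A\to B}(\psi_{RA})$ as an admissible (possibly mixed) input to $\Phi$, and the bound $E_H^{\epsilon}(\Phi\circ\Psi)\leq E_H^{\epsilon}(\Psi)$ follows from data processing for $D_H^{\epsilon}$ together with the fact that the local channel $\Phi_{B\to C}$ maps $\operatorname{SEP}(R:B)$ into $\operatorname{SEP}(R:C)$. The paper states the second of these steps more tersely ("an easy consequence of data processing"), whereas you spell out the separability-preservation and test pull-back details explicitly, which is a faithful elaboration rather than a different argument.
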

\begin{proof}
    The inequality $E_H^{\epsilon}(\Phi\circ \Psi) \leq E_H^{\epsilon}(\Psi)$ is an easy consequence of the fact that the $\epsilon-$hypothesis testing relative entropy $D_H^{\epsilon}$ satisfies the data processing inequality. For the inequality $E_H^{\epsilon}(\Phi\circ \Psi) \leq E_H^{\epsilon}(\Phi)$, note that for any state $\rho_{RA}$ and $\omega_{RB}=\Psi_{A\to B}(\rho_{RA})$,
    \begin{align}
        \inf_{\sigma\in \operatorname{SEP}(R:C)} D_H^{\epsilon} (\Phi_{B\to C}(\Psi_{A\to B}(\rho_{RA})) \Vert \sigma_{RC}) = \inf_{\sigma\in \operatorname{SEP}(R:C)} D_H^{\epsilon} (\Phi_{B\to C}(\omega_{RB}) \Vert \sigma_{RC}) \leq E_H^{\epsilon}(\Phi).
    \end{align}
\end{proof}

\begin{lemma}\label{lemma:Emax}
    For any state $\rho_{AB}\in \State{\Hil_A \otimes \Hil_B}$,
\begin{equation}
    \inf_{\sigma\in \operatorname{SEP}(A:B)} D_{\max}(\rho_{AB}\Vert \sigma_{AB}) \leq \log \min \{d_A,d_B \}.
\end{equation}
\end{lemma}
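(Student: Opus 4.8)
The plan is to exhibit, for every state $\rho_{AB}$, an explicit separable state $\sigma_{AB}$ satisfying the operator inequality $\rho_{AB}\leq \min\{d_A,d_B\}\,\sigma_{AB}$; by the definition of $D_{\max}$ this immediately yields $D_{\max}(\rho_{AB}\Vert \sigma_{AB})\leq \log\min\{d_A,d_B\}$, and taking the infimum over the (larger) set of all separable states gives the claimed bound. Throughout, set $d:=\min\{d_A,d_B\}$.

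First I would reduce to pure states. Writing $\rho_{AB}=\sum_k p_k\,\ketbra{\psi_k}$ as a convex combination of pure states, it suffices to produce for each $k$ a separable state $\tau^k_{AB}$ with $\ketbra{\psi_k}\leq d\,\tau^k_{AB}$: the mixture $\sigma_{AB}:=\sum_k p_k\,\tau^k_{AB}$ is then separable (a convex combination of separable states) and dominates $\rho_{AB}$ up to the factor $d$, since the operator inequalities combine linearly under the convex weights.

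The heart of the argument is the pure-state inequality. For a fixed $\ket{\psi_k}$, take a Schmidt decomposition $\ket{\psi_k}=\sum_{j=1}^{r}\sqrt{\lambda_j}\,\ket{a_j}\ket{b_j}$, where the Schmidt rank satisfies $r\leq d$ and $\{\ket{a_j}\},\{\ket{b_j}\}$ are orthonormal. I would then choose the dephased separable state $\tau^k_{AB}=\sum_{j=1}^{r}\lambda_j\,\ketbra{a_j}\otimes\ketbra{b_j}$ and verify $\ketbra{\psi_k}\leq r\,\tau^k_{AB}\leq d\,\tau^k_{AB}$. Restricting attention to the $r$-dimensional subspace spanned by $\{\ket{a_j}\ket{b_j}\}_{j=1}^{r}$, the projector $\ketbra{\psi_k}$ becomes the rank-one operator $vv^{\dagger}$ with $v=(\sqrt{\lambda_1},\dots,\sqrt{\lambda_r})^{\top}$, while $\tau^k_{AB}$ becomes $\mathrm{diag}(\lambda_1,\dots,\lambda_r)$. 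Conjugating the desired inequality by $\mathrm{diag}(\lambda_j)^{-1/2}$ on this support sends $v$ to the all-ones vector $\mathbf{1}$, and hence reduces everything to $\mathbf{1}\mathbf{1}^{\top}\leq r\,\iden_r$, which holds because $\mathbf{1}\mathbf{1}^{\top}$ has largest eigenvalue exactly $r$.

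I expect this final operator inequality to be the only real content; once it is in hand, the reduction to pure states and the passage to the infimum are routine. The one technical point to handle carefully is the restriction to the support of $\tau^k$ so that the inverse square root is well defined, but since $\ket{\psi_k}$ lives entirely in that support this causes no trouble. Note finally that the symmetry between $A$ and $B$ in the statement comes precisely from the fact that the Schmidt rank is bounded by $\min\{d_A,d_B\}$, which is what produces the minimum in the bound.
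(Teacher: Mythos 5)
Your proof is correct, but it takes a genuinely different route from the paper's. The paper also reduces to pure states, but does so abstractly: it writes the spectral decomposition of $\rho_{AB}$, applies quasi-convexity of $D_{\max}$ (Eq.~\eqref{eq:qconvex}) together with the inf--max exchange of Lemma~\ref{lemma:infmax}, and then invokes the known exact formula $\inf_{\sigma\in\operatorname{SEP}(A:B)} D_{\max}(\psi_{AB}\Vert\sigma_{AB}) = 2\log\big(\sum_i\lambda_i\big)$ for pure states from \cite{Datta2009maxrel}, finishing with Cauchy--Schwarz, $\sum_i\lambda_i\leq\sqrt{d}$. You instead produce an explicit separable witness, the Schmidt-dephased state $\tau=\sum_j\lambda_j\ketbra{a_j}\otimes\ketbra{b_j}$, and verify the operator inequality $\ketbra{\psi}\leq r\,\tau$ directly by conjugating with $\mathrm{diag}(\lambda_j)^{-1/2}$ on the support, reducing everything to $\mathbf{1}\mathbf{1}^{\top}\leq r\,\iden_r$; this is sound, since both operators vanish on the orthogonal complement of $\operatorname{span}\{\ket{a_j}\otimes\ket{b_j}\}_{j=1}^r$, so positivity need only be checked there. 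Your convex-mixing step also quietly replaces the paper's appeal to Lemma~\ref{lemma:infmax}: because your bound $\log d$ is uniform over the pure components, no inf--max swap is needed. What each approach buys: yours is fully self-contained and elementary, requiring no external result on the max-relative entropy of entanglement; the paper's is shorter on the page and records the tighter per-pure-state value $2\log\big(\sum_i\lambda_i\big)\leq\log r$, an exact characterization rather than an upper bound, though that extra precision is not needed for this lemma.
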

\begin{proof}
    Consider the spectral decomposition $\rho_{AB}=\sum_i p_i \psi^i_{AB}$, where $\psi^i_{AB}$ are pure states and $\sum_i p_i =1$, so that we can write
    \begin{align}
        \inf_{\sigma\in \operatorname{SEP}(A:B)}D_{\max}(\rho_{AB}\Vert \sigma_{AB}) &\leq \inf_{\{\sigma^i \}_i \subset \operatorname{SEP}(A:B) } D_{\max}\left( \sum_i p_i \psi^i_{AB} \bigg\| \sum_i p_i\sigma^i_{AB} \right) \nonumber\\ 
        &\leq \inf_{\{\sigma^i \}_i \subset \operatorname{SEP}(A:B) } \max_i D_{\max} (\psi^i_{AB}\Vert \sigma^i_{AB}) \nonumber \\
        &= \max_i \inf_{\sigma^i \in \operatorname{SEP}(A:B) } D_{\max} (\psi^i_{AB}\Vert \sigma^i_{AB}).
    \end{align}
    Note that we made use of quasi-convexity of $D_{\max}$ (Eq.~\eqref{eq:qconvex}) to obtain the second inequality above and of Lemma~\ref{lemma:infmax} to obtain the last equality. Thus, it suffices to prove the claim for pure states. For a pure state $\psi_{AB}$, it is known that \cite{Datta2009maxrel}
    \begin{equation}
        \inf_{\sigma \in \operatorname{SEP}(A:B) } D_{\max} (\psi_{AB}\Vert \sigma_{AB}) = 2\log (\sum_{i=1}^d\lambda_i),
    \end{equation}
    where $\lambda_i\geq 0$ are the Schmidt coefficients of $\psi_{AB}$ satisfying $\sum_{i=1}^d \lambda_i^2=1$ and $d=\min \{d_A,d_B \} $. A simple application of Cauchy-Schwarz inequality then shows
    \begin{equation}
        \sum_{i} \lambda_i \leq \sqrt{d \sum_i \lambda_i^2} = \sqrt{d}, 
    \end{equation}
    Hence,
    \begin{equation}
        \inf_{\sigma \in \operatorname{SEP}(A:B) } D_{\max} (\psi_{AB}\Vert \sigma_{AB}) = 2\log (\sum_{i=1}^d\lambda_i) \leq 2 \log \sqrt{d} = \log d = \log \min \{d_A,d_B \}.
    \end{equation}
\end{proof}

\begin{lemma}\label{lemma:infmax}
    Let $f_k : \mathcal{S}\to \mathbb{R}$ for $k=1,2,\ldots ,K$ be arbitrary mappings, where $\mathcal{S}$ is an arbitrary set. Then, 
    \begin{equation}
        \inf_{\{x_k \}_k \subset \mathcal{S} } \left( \max_k f_k (x_k) \right) = \max_k \left(\inf_{x\in \mathcal{S}} f_k (x) \right).
    \end{equation}
\end{lemma}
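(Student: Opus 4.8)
The plan is to set $a_k := \inf_{x \in \mathcal{S}} f_k(x)$ for each $k$, so that the right-hand side is exactly $c := \max_k a_k$, and then prove the two inequalities $\text{LHS} \geq c$ and $\text{LHS} \leq c$ separately. The structural feature I would exploit throughout is that in the infimum defining the left-hand side the arguments $x_1, \ldots, x_K$ range \emph{independently} over $\mathcal{S}$: the objective $(x_1,\ldots,x_K) \mapsto \max_k f_k(x_k)$ is separable, with each $f_k$ depending only on its own coordinate. This decoupling is precisely what makes the infimum of the maximum collapse to the maximum of the infima, and it is the conceptual heart of the argument. (The identity would fail, in general, if the $x_k$ were forced to coincide.)

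First I would establish the easy inequality $\inf_{\{x_k\}_k} \max_k f_k(x_k) \geq c$. For any fixed tuple $(x_1,\ldots,x_K)$ and any index $j$, one has $\max_k f_k(x_k) \geq f_j(x_j) \geq a_j$; taking the maximum over $j$ yields $\max_k f_k(x_k) \geq c$. Since this lower bound is uniform over all tuples, taking the infimum on the left preserves it.

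For the reverse inequality I would argue by near-optimal independent choices. Fix any real $t > c$ (if $c = +\infty$ the desired inequality is vacuous). Then $a_k \leq c < t$ for every $k$, so by the definition of the infimum there exists $x_k \in \mathcal{S}$ with $f_k(x_k) < t$. Making such a choice simultaneously for every $k$ -- which is legitimate exactly because the coordinates are independent -- produces a tuple with $\max_k f_k(x_k) < t$, whence $\text{LHS} \leq t$. Letting $t \downarrow c$ gives $\text{LHS} \leq c$, and combining with the first part proves the identity.

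The only point requiring care, rather than cleverness, is that the infima $a_k$ need not be attained and may equal $-\infty$. Phrasing the second half in terms of a threshold $t > c$, instead of attempting to select exact minimizers, sidesteps both issues in one stroke: it covers finite $c$, the degenerate case $c = -\infty$ (every $f_k$ unbounded below, giving $\text{LHS} = -\infty = c$), and $c = +\infty$ (vacuous). Notably, no compactness, continuity, or any structure on $\mathcal{S}$ or the $f_k$ beyond their being real-valued maps is needed, so there is no genuine obstacle; the lemma is a clean consequence of the independence of the minimizing variables.
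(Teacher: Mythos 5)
Your proposal is correct and follows essentially the same route as the paper's proof: the easy inequality by bounding each coordinate from below, and the reverse inequality by choosing, for a threshold strictly above the right-hand side, independent near-optimal points $x_k$ for each $k$ (the paper phrases this with $\delta>0$ added to $\max_k\inf_x f_k(x)$, which is your $t>c$). Your explicit treatment of the cases $c=\pm\infty$ is a minor refinement the paper leaves implicit, but it does not change the argument.
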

\begin{proof}
    Clearly, for any subset $\{x_k \}_k \subset \mathcal{S}$, we have
    \begin{equation}
       \max_k \left(\inf_{x\in \mathcal{S}} f_k (x) \right) \leq \max_k f_k (x_k), 
    \end{equation}
    so that 
    \begin{equation}
       \max_k \left(\inf_{x\in \mathcal{S}} f_k (x) \right) \leq \inf_{\{x_k \}_k \subset \mathcal{S} } \left( \max_k f_k (x_k) \right). 
    \end{equation}
    Next we justify that the above inequality cannot be strict. Note that for any $\delta>0$, the number $\max_k \left(\inf_{x\in \mathcal{S}} f_k (x) \right) + \delta$, by definition, cannot be a lower bound on the sets $\{f_k (x) \}_{x\in \mathcal{S}}$ for all $k$. In other words, for every $\delta>0$, there exists a subset $\{x_k \}_k \subset \mathcal{S}$ such that for each $k$, $f_k(x_k) < \max_k \left(\inf_{x\in \mathcal{S}} f_k (x) \right) + \delta$, which means that  
    \begin{equation}
      \max_k f_k (x_k) <  \max_k \left(\inf_{x\in \mathcal{S}} f_k (x) \right) + \delta. 
    \end{equation}
    Hence, $\max_k \left(\inf_{x\in \mathcal{S}} f_k (x) \right)$ must be the greatest lower bound on the set $\{ \max_k f_k (x_k) \}_{\{x_k \}_k \subset \mathcal{S} }$, which is what we want to prove:
    \begin{equation}
        \max_k \left(\inf_{x\in \mathcal{S}} f_k (x) \right) = \inf_{\{x_k \}_k \subset \mathcal{S} } \left( \max_k f_k (x_k) \right). 
    \end{equation}
\end{proof}

\section{Justification for the assumption that $\Hil_0=\{0\}$} \label{appen:Hil0}

Consider the decomposition $\Hil_A = \Hil_0 \oplus \Hil_0^{\perp}$, where we have identified $\Hil_0^{\perp}=\bigoplus_{k=1}^K \Hil_{k,1}\otimes \Hil_{k,2}$. Let $V:\Hil_0^{\perp} \hookrightarrow \Hil_A $ be the canonical embedding. Then, since $\mathcal{P}:\B{\Hil_A}\to \B{\Hil_A}$ projects onto the peripheral space $\chi (\Phi) = 0 \oplus \bigoplus_{k=1}^K (\B{\Hil_{k,1}}\otimes \delta_k)$, it is clear that
\begin{equation}
    \forall X\in \B{\Hil_A}: \quad \mathcal{P}(X) = 0 \oplus V^{\dagger} \mathcal{P}(X) V = 0 \oplus \mathcal{R}_V \circ \mathcal{P} (X),
\end{equation}
where the channel $\mathcal{R}_V:\B{\Hil_A}\to \B{\Hil_0^{\perp}}$ is defined as $\mathcal{R}_V (Y) = V^{\dagger} Y V + \Tr [(\iden - VV^{\dagger})Y ]\sigma $ for some state $\sigma\in \State{\Hil_0^{\perp}}$. Moreover, since $\mathcal{P}=\mathcal{P}^2$, we get
\begin{equation}
    \mathcal{P}(X) = \mathcal{P}(\mathcal{P}(X)) = \mathcal{P}(0 \oplus \mathcal{R}_V \circ \mathcal{P} (X)) = 0 \oplus \overbar{\mathcal{P}}\circ \mathcal{R}_V\circ \mathcal{P} (X),
\end{equation}
where $\overbar{\mathcal{P}}:\B{\Hil_0^{\perp}}\to \B{\Hil_0^{\perp}}$ is defined as (see \cite[Theorem 12]{Lami2016entsaving})
\begin{align}
  \forall X\in \B{\Hil_0^{\perp}}: \quad \overbar{\mathcal{P}}(X) &= \bigoplus_{k=1}^K \Tr_2 (P_k X P_k) \otimes \delta_k. 
\end{align}
Here, $P_k\in \B{\Hil_0^{\perp}}$ is the orthogonal projection that projects onto the block $\Hil_{k,1}\otimes \Hil_{k,2}$ and $\Tr_2$ denotes the partial trace over $\Hil_{k,2}$. Therefore, Lemma~\ref{lemma:bottleneck} show that the capacities of $\mathcal{P}$ are upper bounded by those of $\overbar{\mathcal{P}}$. Hence, we can assume that $\Hil_0$ is the zero subspace, because if not, we can just work with the $\overbar{\mathcal{P}}$ channel instead of $\mathcal{P}$.

\bibliography{references}
\bibliographystyle{alpha}

\end{document}